\definecolor{mygreen}{RGB}{153,255,153}
\definecolor{myorange}{RGB}{255,178,102}
\definecolor{myred}{RGB}{255,153,153}
\definecolor{myblue}{RGB}{153,204,255}
\title{Robustness of Dynamics in Games: A Contraction Mapping Decomposition Approach}
\author{
    Sina~Arefizadeh\\
    Dept. of Electrical and Computer Engineering\\
    Arizona State University\\
    Tempe, Arizona, USA\\
    \texttt{sarefiza@asu.edu}
\And
    Sadegh~Arefizadeh\\
    Dept. of Electrical and Computer Engineering\\
    Tarbiat Modares University\\
    Tehran, Iran\\
    \texttt{sadegh.arefizadeh@modares.ac.ir}
\And
    S. Rasoul Etesami*\\
    Dept. of Industrial and Systems Engineering\\
    \& Coordinated Science Laboratory\\
    University of Illinois Urbana-Champaign\\
    Champaign, Illinois\\
    \texttt{etesami1@illinois.edu}
\And
    Sadegh Bolouki\\
    Dept. of Mechanical Engineering\\
    Polytechnique Montr\'eal\\
    Montreal, Quebec, Canada\\
    \texttt{sadegh.bolouki@polymtl.ca}
}
\theoremstyle{definition}
\newtheorem{lemma}{Lemma}
\newtheorem{thm}{Theorem}
\newtheorem{prop}{Proposition}
\newtheorem{assum}{Assumption}
\newtheorem{definition}{Definition}
\newtheorem{rem}{Remark}
\begin{document}
\maketitle
\begin{abstract}
A systematic framework for analyzing dynamical attributes of games has not been well-studied except for the special class of potential or near-potential games. In particular, the existing results have shortcomings in determining the asymptotic behavior of a given dynamic in a designated game. Although there is a large body literature on developing convergent dynamics to the Nash equilibrium (NE) of a game, in general, the asymptotic behavior of an underlying dynamic may not be even close to a NE. In this paper, we initiate a new direction towards game dynamics by studying the fundamental properties of the map of dynamics in games. To this aim, we first decompose the map of a given dynamic into contractive and non-contractive parts and then explore the asymptotic behavior of those dynamics using the proximity of such decomposition to contraction mappings. In particular, we analyze the non-contractive behavior for better/best response dynamics in discrete-action space sequential/repeated games and show that the non-contractive part of those dynamics is well-behaved in a certain sense. That allows us to estimate the asymptotic behavior of such dynamics using a neighborhood around the fixed point of their contractive part proxy. Finally, we demonstrate the practicality of our framework via an example from duopoly Cournot games.
\end{abstract}

\section{INTRODUCTION}\label{intro}

Noncooperative game theory is one of the key subjects in multi-agent decision systems, in which multiple agents/players strategically take actions to maximize their own payoff functions. Among various classes of noncooperative games, \emph{potential games} constitute a special class, in which there exists a common potential function that is aligned with the incentives of all the individuals' payoff functions \cite{C1}. Potential games cover a wide range of applications such as traffic routing in transportation systems \cite{C2}, power allocation in cognitive radio \cite{C3}, and task scheduling in robotics \cite{C4}. Moreover, for potential games, many distributed learning dynamics, including the best response or fictitious play types, are known to converge to a Nash equilibrium (NE)--a joint action profile in which no player can benefit by deviating unilaterally. In fact, a viable approach for modeling rational players in a strategic environment is the convergence of dynamics resulting from players' interactions to a NE. That raises a natural question on whether a NE can emerge as a long-run behavior of the players' decisions. While the answer to such a question for some special class of games and dynamics is known, the current literature lacks a systematic analysis for determining the asymptotic behavior of arbitrary dynamics in general games \cite{C5,C6}. Therefore, our main objective in this work is to provide an analytic framework to study the asymptotic behavior of dynamics in games using results from contraction theory.

\subsection{Related Work}

In a game setting, depending on how players take actions, various dynamics can emerge. Fictitious play (FP) dynamics is one of the most well-known learning dynamics in games, in which each player chooses her best action to the realized action history of other players and the state of the underlying environment. It is known that the FP dynamics converge to a NE for certain classes of games such as two-player zero-sum games if the players have perfect information on others' play and the underlying environment. Moreover, it is known that best response (BR) dynamics and FP dynamics converge to NE of {\it potential games} \cite{C1,C5,C7,C8}. In addition, in potential games with uncertain environments and fixed communication networks, it is known that FP dynamics converge to a NE \cite{C4}. The paper \cite{C10} shows the convergence of beliefs to a NE of potential games when the information on players' actions is scattered throughout the network. It also provides an extension of \cite{C4} to address the problem in the presence of time-varying communication connectivity. The work \cite{C14}  further extends the results of \cite{C9} and \cite{C10} on FP dynamics to the variant of fictitious introduced in \cite{C10} for near-potential games. Authors of \cite{C9} show that {\it near-potential games} share some similar characteristics with close-potential games.\footnote{A near-potential game can be viewed as a deterministic or stochastic perturbation of a potential game. A close potential game refers to a potential game that is close to the main game in the sense of a certain metric. We refer to Section \ref{Se_Nom} for a formal definition.}  For instance, if FP dynamics converge to a NE in a potential game, they may converge to some neighborhood of that NE in a corresponding near-potential game. The work \cite{C9} also investigates other dynamics such as BR dynamics and Logit dynamics in sequential games. Other update rules in repeated games, such as the multiplicative weight update rule, have been proved to converge to a Nash equilibrium under certain assumptions \cite{C32,C30,C31}. However, the behavior of such update rules under perturbed games has not been analyzed before. In another line of works \cite{C33,C34,C35}, information is used to change the equilibrium outcome, and the convergence of the best response dynamics is addressed under such environments.

Our work is also related to \cite{C11,C12,C13}, in which different notions of perturbed games using deterministic, stochastic, or set-based uncertainties were introduced and studied. For instance, the works \cite{C12,C13} introduce the concept of payoff shifters, which is equivalent to perturbing the utility functions of a game by some additive terms. This notion is then utilized to study the equilibrium in markets in which players' strategies are based on their perceived utilities of some actual market. On the other hand, the work \cite{C13} defines a convex set of games consistent with a set of observed perturbed games and uses convex optimization techniques to find the closest consistent game to the set of observed perturbed games.

An alternative approach for modeling uncertainty in games is through utilizing the notion of approximate equilibrium, which can be used to address perturbation in utility functions \cite{C26}. For instance, instead of computing an exact NE, one can consider finding an $\epsilon$-NE, in which a player's deviation cannot improve its payoff by more than $\epsilon$. Additionally, it was shown in \cite{C21} that computing NE and $\epsilon$-NE for $\epsilon=O(1/\mbox{poly}(n))$ is PPAD-complete, and given small values of $\epsilon$, one cannot hope polynomial-time algorithms for computing an $\epsilon$-NE \cite{C24}. That has motivated some of the prior works \cite{C22,C23} to devise polynomial-time algorithms for computing an $\epsilon$-NE when $\epsilon$ is sufficiently large. We should note that perturbation in games is not always modeled in utility functions. For instance, in the trembling hand perfect equilibrium, the uncertainty in players' preferences is often modeled using a small probability that players may take suboptimal actions. We refer to \cite{C25} for analyzing such dynamics with perturbations in the players' actions.

\subsection{Contributions}


In this work, by decomposing the mapping of the game dynamics to contractive and noncontractive parts, we investigate a new direction to study the mapping of the dynamical systems in games. Our approach provides a paradigm for analyzing the limiting behavior of learning dynamics in games. More precisely, we view a game by an associated mapping that takes as its input a point in the players' action space and maps it to another point in the same space. That allows us to study the dynamic behavior of a game through iterative compositions of its associated mapping. Inspired by some ideas from near-potential games, we will introduce a notion of near-contractive mapping for the map of a generic game and leverage it to study the asymptotic behavior of game dynamics.  To this aim, we first find a contraction mapping close to the mapping of a given game so that we can decompose the game into the contractive and non-contractive parts. Having some information on the given game, we show that the game outcome will eventually be trapped in a neighborhood of the fixed point of the contractive mapping. As a result, we can approximate the asymptotic behavior of the game dynamics using the fixed point of that contractive mapping.

 The idea of decomposing a game into components is not specific to this work. For instance, \cite{C20} provides a game decomposition into non-strategic, potential, and harmonic components and leverages that decomposition to infer some information about the equilibrium points. However, such a decomposition may not be very helpful in a dynamic setting with unknown limiting behavior. The reason is that players' dynamic behavior depends not only on their utility functions but also on their action update rule. Therefore, in this work, we are interested in decomposing the mapping of the dynamics in game rather than the game itself, which can be different from the decomposition given in \cite{C20}. The latter study focuses on presenting the space of games and breaking it down into linear subspaces. This decomposition is carried out with respect to the utility functions of the appropriate game and is independent of any dynamics. In this work, we instead offer an analysis of the infinite time composition of the mapping of a given dynamic on itself by decomposing the map of dynamics into contractive and non-contractive parts. The main application of the analysis given in this current paper is about robustness in games, where robustness refers to the sensitivity of the limit behavior of the game dynamics to the changes in the mapping of the corresponding dynamics due to perturbation in utilities. Despite earlier works that focus on the equilibrium of the perturbed games in a static setting \cite{C12,C13}, we utilize the concept of payoff shifters to model the perturbations in a dynamic setting. In other words, we consider a scenario in which players respond to their opponents using perturbed utilities and based on some specific update rule. Here, the perturbed utilities may belong to some specific set with certain bounds on the Lipschitz constant of the perturbation functions.

Our work is mostly related to the paper \cite{C9} that studies the asymptotic behavior of near-potential games for specific sequential dynamics such as Logit dynamics, BR dynamics, and FP dynamics. In the present work, however, we identify a tighter asymptotic behavior for BR dynamics in sequential near-potential games with discrete-action spaces. In addition, we discuss the asymptotic behavior of BR dynamics in repeated games, which was not addressed in \cite{C9}. Lastly, we illustrate an implication of our framework on the class of duopoly Cournot games.  Our contributions can be summarized as follows:
\begin{itemize}
    \item Introduction of a new framework to analyze the limit behavior of dynamics in games (Theorem \ref{Th:1}) with respect to closely related contractive dynamic. This framework 
 and Theorem \ref{Th:1} in particular has a wide spectrum of applications. In fact, as long as we can guarantee that the limit behavior is bounded or the actions space is compact this theorem and the framework work.
    \item Analyzing the asymptotic behavior of the best response and better response dynamics in sequential discrete action space games using the proposed framework (Theorems \ref{Th:1} and \ref{Th:2}) with respect to closely related contractive dynamic.
    \item Analyzing the asymptotic behavior of the best response dynamics in repeated games under continuous and discrete action spaces (Theorems \ref{Th:1} and \ref{th:4}) with respect to closely related contractive dynamic.
    \item Analyzing the asymptotic behavior of the aforementioned dynamics in games when players only have an estimate of others' actions, but their estimations will become accurate over time with respect to closely related contractive dynamic. This extends our proposed framework to the settings where players respond to estimations of other players' decisions.
\end{itemize}

\subsection{Organization}
In Section \ref{Se_Nom}, we introduce some basic notions and terminology. In Section \ref{sec:Pro_St}, we formally introduce the problem. In Section \ref{sec:Main_results}, we provide our main results on the asymptotic behavior of the sequential update dynamics in discrete-action space, assuming that the path of play is eventually trapped in a bounded region in the action space. We extend those results to analyze the map of simultaneous update dynamics in repeated games and generalize our findings to situations where agents' estimates of others' actions are inaccurate and gradually improve over time. We provide an application of our framework on a special class of duopoly Cournot games  in addition to simulation of a special setting of this game in Section \ref{Example_and_Simulation}. We conclude the paper in Section \ref{Conclusion} and relegate omitted proofs to Appendix I.


\section{Notions and Terminology} \label{Se_Nom}

Let us consider a noncooperative game with a set $\mathcal{M}= \{1,\ldots,N\}$ of players that are connected by a fixed undirected underlying graph $G(\mathcal{M},E)$, with node set $\mathcal{M}$ and edge set $E$. We denote the finite action set of player $i\in \mathcal{M}$ by $\mathcal{A}_i \subset \mathbb{R}$, and the joint action set of all the players by $\mathcal{A}=\prod_{i=1}^N\mathcal{A}_i$. Moreover, we assume that the perturbed utility function of player $i\in \mathcal{M}$, denoted by $u_i^M:\mathcal{A}\rightarrow \mathbb{R}$, can be expressed as 
\begin{equation}\label{eq_1}
u_i^M=u_i+\Delta u_i,
\end{equation}where $u_i: \mathcal{A}\rightarrow \mathbb{R}$, $i\in \mathcal{M}$ are players' \emph{nominal} utility functions, which are assumed to be independent for different players $i$. The term $\Delta u_i$ denotes the perturbation function to the player $i$'s nominal utility function. Such a setting may arise due to adversarial behavior in multi-agent networks where each agent affects other agents' behavior.

We adopt the notation $a_{-j}=(a_1,\ldots,a_{j-1},a_{j+1},\ldots,a_N)$. According to \eqref{eq_1}, the perturbed utility of player $i$, $u_i^M$, is a summation of her own nominal 
utility function and a perturbation function. Therefore, we can view the game with utility functions $u_i^M$, denoted by $\Gamma^M=(\mathcal{M},\{\mathcal{A}_i,u_i^M\}_{i\in\mathcal{M}})$, as a perturbed version of the so-called \emph{original game} $\Gamma=(\mathcal{M},\{\mathcal{A}_i,u_i\}_{i\in\mathcal{M}})$ with nominal utility functions $u_i$. Here, the superscript $M$ stands for the \emph{perturbed} game. 

The following definition provides the concept of convex hull of a given set in a Euclidean space.
\begin{definition}{\normalfont(Convex hull, ~\cite{C36})} The convex hull of a given set $X$ in a Euclidean space is the minimal convex set containing $X$. We denote the convex hull of the set $X\subseteq \mathbb{R}^N$ by $Conv(X)$.
\end{definition}

As we mentioned earlier, potential games form a well-studied class of games for which the asymptotic behavior of various dynamics is known. This class of games plays a key role in the study of near-potential games, which are in the proximity of potential games. The following definition introduces the class of potential games \cite{C1}.  
\begin{definition}{\normalfont(Potential Game, \cite[Definition~1]{C14})}\label{def:1}
A game with utility functions $u_i$ and joint action set $\mathcal{A}=\prod_{i=1}^N\mathcal{A}_i$ is called an (exact) potential game if there exists a global function $\phi:\mathcal{A}\rightarrow \mathbb{R}$ such that for any player $i$, and two actions $a_i, a'_i \in \mathcal{A}_i$, and any $a_{-i}\in \mathcal{A}_{-i}$ (where $\mathcal{A}_{-i}$ is the Cartesian product of  $\mathcal{A}_j$ for $j\in\mathcal{M}\setminus\{i\} $), we have
\begin{align}\label{pot}
u_i(a_i,a_{-i})-u_i(a'_i,a_{-i})=\phi(a_i,a_{-i})-\phi(a'_i,a_{-i}).
\end{align}
\end{definition}

Next, let us assume $Conv(\mathcal{A})\subset D$, where $D$ is some bounded disk in $\mathbb{R}^N$. For any mapping $F$, let $F^n$ denote $n$-times composition of mapping $F$ by itself. Given an update rule dynamics for players' actions, let us denote the game mapping under that update rule for the perturbed game $\Gamma^M$ and the original game $\Gamma$ by $K:D\rightarrow D$ and $Z:D\rightarrow D$, respectively. Then, one can use the following iterative maps to describe the dynamic behavior of the games $\Gamma^M$ and $\Gamma$ as

\begin{equation}\label{Map_K}
x^{t}=K^t(x^0),
\end{equation}
\begin{equation}\label{Map_Z}
y^{t}=Z^t(y^0),
\end{equation}
where $x^0$ and $y^0$ are some initial action profiles in $\mathcal{A}$. Sequences  $\{x^t\}_{t=1}^\infty$ and $\{y^t\}_{t=1}^\infty$ show the trajectory of systems (\ref{Map_K}) and (\ref{Map_Z}) in $D$, respectively. For instance, one can represent the BR update rule for simultaneous update in repeated games in the perturbed and original games by the mappings $K=(K_1,K_2,...,K_N)$ and $Z=(Z_1,Z_2,...,Z_N)$, where 
\begin{align}
&K_i(a)\in{\arg\max}_{a_{i}\in \mathcal{A}_i} u^M_i(a_{i},a_{-i})\cr
&Z_i(a)\in{ \arg\max}_{a_{i}\in \mathcal{A}_i} u_i(a_{i},a_{-i}).
\end{align}

In the remainder of this section, we provide definitions of NE and game distance, which will be useful for our subsequent analysis. To address strategy profiles that are close to a NE, we use the concept of $\epsilon$-Nash equilibrium ($\epsilon$-NE). 
\begin{definition}{\normalfont($\epsilon$-Nash equilibrium, \cite[Definition~4]{C14})} \label{epsilon-eq}
For an arbitrary game $G=(\mathcal{M}, \{\mathcal{A}_i,u_i\}_{i\in\mathcal{M}})$, an action profile $p=(p_1,..., p_N)$ is called an $\epsilon$-NE ($\epsilon \geq 0$) if

\begin{equation}
u_i(p'_i;p_{-i})- u_i (p_i ; p_{-i})\leq \epsilon
\end{equation}

for every $p'_i \in \mathcal{A}_i$ and any player $i \in \mathcal{M}$. We denote the set of $\epsilon$-NE by $\mathcal{X}_\epsilon$. Note
that a NE is an $\epsilon$-NE when $\epsilon=0$.
\end{definition}

\begin{definition}{\normalfont(Maximum Pairwise Difference-MPD, \cite[Definition~2]{C14})} \label{def_mpd}
Let $G=(\mathcal{M},\{\mathcal{A}_i,u_i\}_{i\in\mathcal{M}})$ and  $\hat{G}=(\mathcal{M},\{\mathcal{A}_i,\hat{u}_i\}_{i\in\mathcal{M}})$ be two  games sharing the same set of players and action sets. For an action profile $a=(a_i,a_{-i}) \in \mathcal{A}$, let $d_{(a'_i,a)}^{G}:=u_i(a'_i,a_{-i})-u_i(a_i,a_{-i})$ be the difference in player $i$'s utility by changing its action to $a'_i \in \mathcal{A}_i$ (similarly we can define $d_{(a'_i,a)}^{\hat{G}}$). Then, the maximum pairwise difference between the games $G$ and $\hat{G}$ is defined by
\begin{equation}\label{eq_mpd}
    d(G, \hat{G}):=  \underset{i \in \mathcal{M}, \, a'_i \in \mathcal{A}_i,\, a  \in \mathcal{A} }{\max} |d_{(a'_i,a)}^{G}- d_{(a_i',a)}^{\hat{G}}|.
\end{equation}
\end{definition}





Finally, we adopt the following notations throughout the paper. Unless stated otherwise, we let $N_r(\zeta)$ be an open ball of radius $r$ around $\zeta\in \mathbb{R}^N$ with respect to the Euclidean distance. We denote the closure of $N_r(\zeta)$ by $N_r[\zeta]$. Moreover, we use $\overline{\{\eta^n\}}$ to denote the set of limit points of a sequence $\{\eta^n\}_{n=1}^\infty$. Additionally, we denote power $p$ of a scalar $\eta$ by $\eta^{(p)}$.

\section{Problem Formulation}\label{sec:Pro_St}
We now formally define the problem of interest regarding the asymptotic behavior of dynamics in games. Let us consider a game $\Gamma^M=(\mathcal{M},\{\mathcal{A}_i,u^M_i\}_{i\in\mathcal{M}})$ with players set $\mathcal{M}$, action sets $\mathcal{A}_i, i\in\mathcal{M}$, and perturbed utility functions $u^M_i$ as defined in Section~\ref{Se_Nom}. Given the game $\Gamma^M$, one can define a dynamic decision-making process in which players iteratively update their actions according to some update rule and observations made on others' latest actions. Consequently, a \emph{path of play} is formed by following such an iterative decision-making process, which can be viewed as the trajectory of the dynamics (\ref{Map_K}). Similarly, one can define a path of play for  $\Gamma=(\mathcal{M},\{\mathcal{A}_i,u_i\}_{i\in\mathcal{M}})$ with corresponding dynamics (\ref{Map_Z}).

Our main objective in this work is to see whether the dynamics of the perturbed games (\ref{Map_K}) can be formulated as perturbation of those in original game (\ref{Map_Z}). In particular, we wish to derive sufficient conditions that can relate the convergence behaviour of the perturbed dynamics (\ref{Map_K}) to that of the original dynamics (\ref{Map_Z}).

\section{Main Results for Dynamics in Games}\label{sec:Main_results}
In this section, we first focus on the mapping associated with a general game and investigate its limiting behavior, assuming that the path of play eventually converges to some bounded domain. We then relax this assumption by deriving sufficient conditions under which the path of play for the BR dynamics in sequential discrete-action games (Section \ref{disc_act_sp_seq_ga}), and the BR dynamics in repeated discrete-action or continuous-action (Section \ref{Sub_sec_rep_game}) games always converges to some bounded set.




\begin{lemma}\label{lem1}
Let $\{\delta^i\}_{i=1}^{\infty}$ and $\{p^i\}_{i=1}^{\infty}$ be two sequences of non negative real numbers, and $\{\delta^i\}_{i=1}^{\infty}$ be a bounded sequence. Given an arbitrary $p^0$, and $L_F<1$, assume that
\begin{equation}
    p^n\leq L_F^{(n)} p^0+\sum_{k=1}^{n-1}\delta^k L_F^{(n-k)},\ \forall n\ge 1.
\end{equation}
Then, we have $\limsup_{n} p^n \leq \frac{1}{1-L_F}\limsup_{n}\delta^n$.
\end{lemma}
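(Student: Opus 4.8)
The plan is to read the right-hand side as a discrete convolution of the bounded sequence $\{\delta^k\}$ against the geometric kernel $\{L_F^{(j)}\}$, which is summable precisely because $0\le L_F<1$, and to exploit that the total mass of this kernel satisfies $\sum_{j=0}^{\infty}L_F^{(j)}=\frac{1}{1-L_F}$. First I would set $L:=\limsup_{n}\delta^n$, which is finite since $\{\delta^i\}$ is bounded, write $B:=\sup_k \delta^k<\infty$, and fix an arbitrary $\epsilon>0$. By the definition of $\limsup$, there exists an index $N$ such that $\delta^k\le L+\epsilon$ for every $k\ge N$. I would also note that the initial term is harmless: $L_F^{(n)}p^0\to 0$ as $n\to\infty$, again because $L_F<1$.

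The core step is to split the convolution sum at $N$, writing (for $n>N$)
\begin{equation}
\sum_{k=1}^{n-1}\delta^k L_F^{(n-k)}=\sum_{k=1}^{N-1}\delta^k L_F^{(n-k)}+\sum_{k=N}^{n-1}\delta^k L_F^{(n-k)}.
\end{equation}
For the head I would use $\delta^k\le B$ together with $n-k\ge n-N+1$ to get the bound $\sum_{k=1}^{N-1}\delta^k L_F^{(n-k)}\le B\,(N-1)\,L_F^{(n-N+1)}$, which tends to $0$ as $n\to\infty$ for the now-fixed $N$. For the tail I would bound each factor $\delta^k$ by $L+\epsilon$, reindex via $j=n-k$, and extend the range to obtain $\sum_{k=N}^{n-1}\delta^k L_F^{(n-k)}\le (L+\epsilon)\sum_{j=1}^{n-N}L_F^{(j)}\le (L+\epsilon)\sum_{j=0}^{\infty}L_F^{(j)}=\frac{L+\epsilon}{1-L_F}$. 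Assembling the three pieces of the hypothesized inequality and taking $\limsup_{n}$ annihilates both the $L_F^{(n)}p^0$ term and the head, leaving $\limsup_{n}p^n\le\frac{L+\epsilon}{1-L_F}$. Since $\epsilon>0$ was arbitrary, letting $\epsilon\to 0$ yields $\limsup_{n}p^n\le\frac{1}{1-L_F}\,\limsup_{n}\delta^n$, as claimed.

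The main obstacle is the non-uniform control of the head: the early entries $\delta^1,\dots,\delta^{N-1}$ may be as large as $B$ and need not be close to $L$, so I cannot bound them termwise by $L+\epsilon$. The argument must instead rely on the decay of the geometric weight $L_F^{(n-k)}$ attached to these old indices to drive their total contribution to zero in the limit, while the boundedness hypothesis on $\{\delta^i\}$ is exactly what keeps that head finite so its geometric weights can extinguish it. Everything else is the standard summable-kernel estimate, and I would only need to double-check that $L_F\ge 0$ (so that the kernel is genuinely geometric and the partial sums are monotone), which is implicit since $L_F$ plays the role of a contraction/Lipschitz constant.
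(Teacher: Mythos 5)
Your proof is correct, but it takes a genuinely different route from the paper's. You use the classical head/tail splitting of the convolution against a summable geometric kernel: fix $N$ with $\delta^k\le L+\epsilon$ for all $k\ge N$, kill the head $\sum_{k=1}^{N-1}\delta^k L_F^{(n-k)}$ as $n\to\infty$ using only boundedness of $\{\delta^k\}$ and the decay of the weights, and bound the tail by $(L+\epsilon)\sum_{j\ge 0}L_F^{(j)}=\frac{L+\epsilon}{1-L_F}$. The paper instead constructs a \emph{decreasing majorant} sequence $\{\Delta^i\}$ (equal to $\max_{n\le N_\epsilon}\delta^n$ up to the index $N_\epsilon$ supplied by its auxiliary Lemma~\ref{lem0}, and to $\alpha+\epsilon$ afterwards), applies Chebyshev's sum inequality to the oppositely ordered sequences $\Delta^k$ and $L_F^{(n-k)}$ to bound $\sum_{k=1}^{n-1}\Delta^k L_F^{(n-k)}$ by the Ces\`aro average of the $\Delta^i$ times $\sum_{i=1}^{n-1}L_F^{(i)}$, and then exploits that this Ces\`aro average converges to $\alpha+\epsilon$; both arguments finish by letting $\epsilon\to 0$. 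What your route buys is elementarity and transparency: you need neither the auxiliary limsup lemma, nor the majorant construction, nor Chebyshev's inequality. What the paper's route yields, as a byproduct, is the slightly sharper constant $L_F/(1-L_F)$ (the geometric weights start at $L_F^{(1)}$, not $L_F^{(0)}$); you would recover the same sharpening for free by not extending your tail sum to the $j=0$ term. Finally, your closing caveat is apt but applies equally to the paper: both proofs implicitly use $L_F\ge 0$ (the paper's claim $L_F^{(1)}\ge\cdots\ge L_F^{(n-1)}$ requires it just as your geometric-sum bounds do), which is harmless since $L_F$ plays the role of a contraction constant throughout.
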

\begin{proof}
Since $\{\delta^{n}\}_{n=1}^{\infty}$ is a bounded sequence, we have
\begin{equation}
\limsup_{n} \delta^n=\max \overline{\{\delta^n\}}:=\alpha.
\end{equation}
According to Lemma \ref{lem0} in Appendix I, for any $\epsilon>0$, there are infinitely many $n$ such that $\alpha-\epsilon<\delta^n$, and only a finite number of $n$ such that $\alpha+\epsilon<\delta^n$. Let us denote the largest index among those finitely many indices by $N_\epsilon$. Therefore,  for some $n\leq N_\epsilon, \alpha+\epsilon<\delta^n$,  and $\delta^n\leq \alpha+\epsilon, \forall n > N_\epsilon$. Let us define $\delta_\epsilon=\max_{n\leq N_\epsilon} \delta^n$, and consider the sequence $\{\Delta^i\}_{i=1}^{\infty}$ defined by 
\begin{align}
\Delta^i=\begin{cases}\delta_\epsilon & \mbox{if} \ i\leq N_\epsilon\\
\alpha+\epsilon & \mbox{otherwise}.
\end{cases} 
\end{align}
This sequence is decreasing and each of its elements is greater than the corresponding element of $\{\delta^i\}_{i=1}^{\infty}$. In particular, $\lim_{n\rightarrow \infty} \Delta^n=\alpha+\epsilon$. Now, we can write 
\begin{align}\label{eq:1}
p^n\leq L_F^{(n)} p^0+\sum_{k=1}^{n-1}\delta^k L_F^{(n-k)}\leq L_F^{(n)} p^0+\sum_{k=1}^{n-1}\Delta^k L_F^{(n-k)}.
\end{align} Since $\Delta^{n-1}\leq\cdot\cdot\cdot\leq\Delta^{1}$ and  $L_F^{(1)}\geq\cdot\cdot\cdot\geq L_F^{(n-1)}$, we can use Chebyshev's sum inequality to conclude that 
\begin{equation}
\Delta^{n-1}L_F^{(1)}+\cdot\cdot\cdot+\Delta^{1}L_F^{(n-1)}\leq \left(\frac{1}{n-1}\sum_{i=1}^{n-1}\Delta^i\right)\sum_{i=1}^{n-1}L_F^{(i)}.
\end{equation}
Substituting this relation into \eqref{eq:1}, we have $p^n\leq L_F\Delta_{\mbox{ave}}/(1-L_F)$ as $n\rightarrow \infty$, where $\Delta_{\mbox{ave}}$ denotes the average of all elements of the sequence $\{\Delta^i\}_{i=1}^{\infty}$. Since this sequence converges to $\limsup_{i}\delta^i+\epsilon$, the average of its elements converges to the same value. Hence, as $n\to \infty$, the sequence  $\{p^i\}_{i=1}^{\infty}$ is eventually less than  $L_F(\limsup_{i}\delta^i+\epsilon)/(1-L_F), \forall \epsilon>0$. Therefore, we get 
\begin{equation}
\limsup_{n} p^n - \frac{L_F\limsup_{n}\delta^n}{1-L_F}\leq \frac{L_F\epsilon}{1-L_F}, \ \forall \epsilon>0.
\end{equation}
This completes the proof by letting $\epsilon\to 0$ and noting that $L_F<1$.
\end{proof}

\begin{definition}{\normalfont(Contractive Mapping)}\label{Cont_map}
A map $C:\mathbb{R}^N \rightarrow \mathbb{R}^N$ is called contractive if 
\begin{equation}
\|C(x)-C(y)\|\leq L_C\|x-y\|,
\end{equation}
for some $L_C<1$ and all $x,y\in \mathbb{R}^N$. Any contractive map $C$ admits a fixed point $x^*$, such that $C(x^*)=x^*$.  
\end{definition}

Our next goal is to estimate the mapping of a dynamic associated with a given game using contractive mappings. The following lemma whose proof is given in Appendix I shows the existence of near-contractive mappings for a given map.

\begin{lemma}\label{lem2}
Let $Z:D\rightarrow D\subseteq \mathbb{R}^N$ be a mapping. Given any bounded domain $D'\subset D$, there exist infinitely many bounded contractive mappings $C:D\rightarrow D$, such that $\|Z(x)-C(x)\|\leq \delta, \forall x\in D'$ for some $\delta \geq 0$.
\end{lemma}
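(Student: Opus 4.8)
The plan is to exploit the fact that the statement only asks for the existence of \emph{some} finite $\delta \geq 0$, not for $\delta$ to be small; consequently the crudest contractions already suffice. First I would recall from the standing assumption of this section that $D$ is a bounded disk in $\mathbb{R}^N$, so that $\mathrm{diam}(D) := \sup_{p,q \in D}\|p - q\|$ is finite. Since $Z$ maps $D$ into $D$, every value $Z(x)$ lies in $D$, a fact I will use to control $\|Z(x) - C(x)\|$ uniformly.

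Next, for each fixed $c \in D$ I would take the constant map $C_c(x) := c$ for all $x \in D$. Such a map is contractive in the sense of Definition \ref{Cont_map}, since $\|C_c(x) - C_c(y)\| = 0 \leq L_C\|x - y\|$ for any $L_C \in (0,1)$; it has bounded image $\{c\}$, and it maps $D$ into $D$ because $c \in D$ (its unique fixed point being $c$ itself). Thus each $C_c$ meets every structural requirement of the lemma.

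The uniform bound then follows immediately: for every $x \in D'$ we have both $Z(x) \in D$ and $c \in D$, so $\|Z(x) - C_c(x)\| = \|Z(x) - c\| \leq \mathrm{diam}(D) =: \delta$, which is finite and independent of $x$ (indeed the bound holds on all of $D$, not just $D'$). Finally, because $D$ is a disk with nonempty interior it contains infinitely many (in fact uncountably many) distinct points $c$, and distinct centers yield distinct constant maps $C_c$; this produces the infinitely many near-contractive mappings the statement requires.

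The only genuine subtlety, and the step I would treat most carefully, is the uniform bound, which relies essentially on the boundedness of $D$: for an arbitrary unbounded $D$ and an arbitrary (possibly wild) $Z$, the quantity $\sup_{x \in D'}\|Z(x) - c\|$ could be infinite, and constant maps would no longer certify a finite $\delta$. If one additionally wanted $\delta$ to be as small as possible rather than merely finite, I would replace constant maps by affine contractions $x \mapsto c + \lambda(x - c)$ with $\lambda \in [0,1)$ directed toward a center $c$ of $D$, which still map $D$ into itself and form an infinite family; but this refinement is unnecessary for the stated claim.
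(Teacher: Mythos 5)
Your proof is correct in the paper's setting, and it takes a genuinely different and more elementary route than the paper's. The paper does not use constant maps: it works in the vector space of mappings $\mathbb{R}^N\to\mathbb{R}^N$, considers the subspace $\mathcal{S}$ of bounded Lipschitz functions on $D'$, solves the minimization problem $\min_{F\in\mathcal{S}}\|Z-F\|$ to obtain a closest Lipschitz approximation $F_0$ (a projection argument), and then scales: for $0\leq\alpha\leq 1$ the map $\alpha F_0$ is contractive when $\alpha$ is small enough, and the triangle inequality gives $\|Z-\alpha F_0\|\leq \delta+\delta_\alpha$; varying $\alpha$ and the feasible function yields the infinitely many maps. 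Your constant-map family is essentially the degenerate $\alpha=0$ instance of that construction --- indeed the remark following Lemma \ref{lem2} notes that $\alpha=0$ gives $\delta$ equal to the diameter of $D$, which is exactly your bound. What your route buys is rigor and economy: you sidestep the delicate steps the paper glosses over (attainment of the minimum in an infinite-dimensional optimization problem, and the description of the subspace $\mathcal{S}$ as a ``hyperplane'' onto which one projects), and you correctly isolate the one hypothesis actually needed --- boundedness of $D$, from the standing assumption that $D$ is a bounded disk --- which the paper's proof also uses implicitly, since otherwise $\|Z-F_0\|$ need not be finite on $D'$. What the paper's route buys is a potentially much smaller $\delta$: since Theorem \ref{Th:1} yields limiting neighborhoods of radius proportional to $\delta_2/(1-L_C)$, the strength of the downstream results hinges on how small $\delta$ can be made, and the projection-plus-scaling construction aims at that, whereas $\mathrm{diam}(D)$ is the worst case (your closing suggestion of affine contractions $x\mapsto c+\lambda(x-c)$ points in the same direction). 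For the literal statement of Lemma \ref{lem2}, which only asks for some finite $\delta\geq 0$, your argument is sufficient and complete.
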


As we will see later, the term $\delta$ in lemma \eqref{lem2} plays a crucial role in achieving desirable results regarding the limit behavior of dynamics. In particular, for smaller values of $\delta$ stronger convergence results can be obtained. The following remark highlights some facts regarding the parameter $\delta$.
\begin{rem}
The term $\delta$ appearing in the statement of Lemma \ref{lem2} can vary from the diameter of $D$ (if $\alpha=0$ in the proof of the lemma) to $0$ (if $\alpha=1$). While it is difficult to claim anything more for generic games, for special types of games, the calculations of Lemma \ref{lem2} might be simplified to obtain $\delta$ values that are significantly smaller than the diameter of $D$. For instance, in an extreme case where the mapping of the game dynamic belongs to the set of contractive mappings, this term becomes zero. Moreover, there could be other cases where one can obtain a closed-form for $\delta$ through solving the proposed mathematical programming in the proof of Lemma \ref{lem2}. This will be useful in the analysis of our proposed framework.
\end{rem}

Next, we state our main result that estimates the behavior of a given mapping using a contractive mapping.



\begin{thm}\label{Th:1}
Let $Z,K:D\rightarrow D\subseteq \mathbb{R}^N$ be two mappings such that for some $\delta_1\geq0$, $\|K(x)-Z(x)\|\leq \delta_1, \forall x\in D$. Then, the following hold:
\begin{enumerate}
    \item For a given $x^0\in D$, if the sequences $K^n(x^0)$ and $Z^n(x^0)$ are eventually trapped in a bounded domain $D'\subset D$, then there exists a contractive mapping $C:D\rightarrow D$ with Lipschitz constant $L_C$, such that for some $\delta_2>0$, $\|Z(x)-C(x)\| \leq \delta_2$, $\forall x\in D'$ . Moreover, as $n\rightarrow \infty$, the sequences $Z^n(x^0)$ and $K^n(x^0)$ converge to $N_{\delta_2/(1-L_C)}(x^*)$ and $N_{(\delta_2+\delta_1)/(1-L_C)}(x^*)$, respectively, where $x^*$ is the fixed point of $C$. In addition, if $Z^n(x^0)$ converges to $\Tilde{x}$, then $K^n(x^0)$ converges to the neighborhood set $N_{(2\delta_2+\delta_1)/(1-L_C)}(\Tilde{x})$.\label{Th:1_1}
    \item Assume that $Z^m:D\rightarrow D$ is a contractive map for some $m\in \mathbb{N}$, with Lipschitz constant $L_C$ and a fixed point $\Tilde{x}$. Moreover, assume that $K$ is a map with Lipschitz constant $L_K$, and $Z^{mn}(x^0)$ converges to $\Tilde{x}$ as $n\rightarrow \infty$. Then for any integer $0<r<m$, the map $K^{mn+r}$ approaches the set $N_{R_r}[\Tilde{x}]$ as $n\rightarrow \infty$, where $R_r=L_{K^r}(1-{L_K}^{(m)})\delta_1/\big((1-L_C)(1-{L_K})\big)+\|K^r(\Tilde{x})-\Tilde{x}\|$, and $L_{K^r}$ is the Lipschitz constant for $K^r$.\label{Th:1_2}
\end{enumerate}
\end{thm}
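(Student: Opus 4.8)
The plan is to reduce both parts to a single scalar recursion of the form $p^{n+1}\le c+Lp^n$ for the distance to an appropriate fixed point, and then invoke Lemma~\ref{lem1} to read off the limiting radius $c/(1-L)$.

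For Part~\ref{Th:1_1}, the existence of a contractive $C$ with $\|Z(x)-C(x)\|\le\delta_2$ on $D'$ is immediate from Lemma~\ref{lem2} applied to $Z$ and the bounded domain $D'$; let $x^*=C(x^*)$ be its fixed point and $L_C<1$ its Lipschitz constant. Writing $y^n=Z^n(x^0)$, for every $n$ large enough that $y^n\in D'$ I would estimate
\[
\|y^{n+1}-x^*\|=\|Z(y^n)-C(x^*)\|\le\|Z(y^n)-C(y^n)\|+\|C(y^n)-C(x^*)\|\le\delta_2+L_C\|y^n-x^*\|,
\]
using the $\delta_2$-bound (valid because $y^n\in D'$) and the contraction property of $C$. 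With $p^n=\|y^n-x^*\|$ this matches the hypothesis of Lemma~\ref{lem1} with constant forcing term $\delta_2$, so $\limsup_n p^n\le\delta_2/(1-L_C)$, i.e.\ $Z^n(x^0)$ converges to $N_{\delta_2/(1-L_C)}(x^*)$. Repeating the computation for $x^n=K^n(x^0)$ but inserting the extra term $\|K(x^n)-Z(x^n)\|\le\delta_1$ gives $\|x^{n+1}-x^*\|\le(\delta_1+\delta_2)+L_C\|x^n-x^*\|$, hence convergence of $K^n(x^0)$ to $N_{(\delta_1+\delta_2)/(1-L_C)}(x^*)$. Finally, if $Z^n(x^0)\to\tilde x$ then the first limit forces $\|\tilde x-x^*\|\le\delta_2/(1-L_C)$, and a triangle inequality combined with the $K$-bound yields $\limsup_n\|K^n(x^0)-\tilde x\|\le(\delta_1+\delta_2)/(1-L_C)+\delta_2/(1-L_C)=(2\delta_2+\delta_1)/(1-L_C)$.

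For Part~\ref{Th:1_2}, the idea is to pass to the block maps $K^m$ and $Z^m$ and reuse the Part~\ref{Th:1_1} recursion with $Z^m$ playing the role of the \emph{exact} contraction $C$, so that the $\delta_2$ term disappears. The key preliminary estimate is a telescoping bound on $\|K^m(x)-Z^m(x)\|$: expanding the difference as $\sum_{j=0}^{m-1}\big(K^{m-j-1}(K(Z^j(x)))-K^{m-j-1}(Z(Z^j(x)))\big)$ and applying the Lipschitz constant $L_K^{(m-j-1)}$ of $K^{m-j-1}$ together with $\|K(\cdot)-Z(\cdot)\|\le\delta_1$ gives
\[
\|K^m(x)-Z^m(x)\|\le\delta_1\sum_{i=0}^{m-1}L_K^{(i)}=\delta_1\frac{1-L_K^{(m)}}{1-L_K}=:\delta_1'.
\]
Since $Z^m$ is an $L_C$-contraction fixing $\tilde x$, applying the Part~\ref{Th:1_1} recursion to $q^n=\|K^{mn}(x^0)-\tilde x\|$ yields $q^{n+1}\le\delta_1'+L_C q^n$ and hence $\limsup_n\|K^{mn}(x^0)-\tilde x\|\le\delta_1'/(1-L_C)$. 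I then peel off the remaining $r$ steps via $K^{mn+r}(x^0)=K^r(K^{mn}(x^0))$ and the Lipschitz constant $L_{K^r}$ of $K^r$,
\[
\|K^{mn+r}(x^0)-\tilde x\|\le L_{K^r}\|K^{mn}(x^0)-\tilde x\|+\|K^r(\tilde x)-\tilde x\|,
\]
and taking $\limsup$ reproduces exactly $R_r=L_{K^r}(1-L_K^{(m)})\delta_1/\big((1-L_C)(1-L_K)\big)+\|K^r(\tilde x)-\tilde x\|$.

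The triangle-inequality bookkeeping and the applications of Lemma~\ref{lem1} are routine; the one step demanding genuine care is the telescoping estimate of $\|K^m-Z^m\|$ in Part~\ref{Th:1_2}, since that is where the geometric factor $(1-L_K^{(m)})/(1-L_K)$ — and therefore the precise form of $R_r$ — is produced. A secondary point to treat cleanly in Part~\ref{Th:1_1} is that the per-step inequalities hold only after the orbits enter $D'$; since the conclusions are phrased through $\limsup$, it suffices to start each recursion at the finite trapping time, making the transient harmless.
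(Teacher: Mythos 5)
Your proof is correct and takes essentially the same route as the paper's: Lemma~\ref{lem2} supplies the contraction $C$, a triangle-inequality one-step recursion combined with Lemma~\ref{lem1} (equivalently, summing the geometric series) yields the limiting radii, and the geometric bound $\|K^m(x)-Z^m(x)\|\le \delta_1(1-L_K^{(m)})/(1-L_K)$ drives Part~\ref{Th:1_2}. The only cosmetic difference is that you compare the iterates directly with the fixed points $x^*$ and $\tilde{x}$, whereas the paper compares them with $C^{n}(x^0)$ and $Z^{mn}(x^0)$; a side benefit of your version is that it makes explicit the $(2\delta_2+\delta_1)/(1-L_C)$ step that the paper dismisses as ``the geometry of the problem.''
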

\begin{proof}
1) The existence of the contractive mapping $C$ in the statement of the theorem follows from Lemma \ref{lem2}, whose proof gives a constructive method for finding $C$. Let $e^0=Z(x^0)-C(x^0)$, and iteratively define $e^{n}=Z^{n+1}(x^0)-C^{n+1}(x^0)$ and $b^n=\|Z(Z^n(x^0))-C(Z^n(x^0))\|$. Then, we have
\begin{equation}\label{eq:3}
\|e^{n}\|\leq b^n+L_C\|e^{n-1}\|. 
\end{equation}
Expanding the recursive formula in (\ref{eq:3}) and writing it in the form of Lemma \ref{lem1}, and noting that $\limsup_{i}b^i=\delta_2$, we conclude that $\|e^{n}\|$ eventually becomes less than $\delta_2/(1-L_C)$. Let us define $E^{n}=K^{n+1}(x^0)-C^{n+1}(x^0)$ and $r^n=\|Z(K^n(x^0))-C(K^n(x^0))\|$. Using triangle inequality, we can write \begin{align}
\|K^{n+1}(x^0)-C^{n+1}(x^0)\|&\leq \|K\big(K^n(x^0)\big)-Z\big(K^n(x^0)\big)\|\cr 
&+\|Z\big(K^n(x^0)\big)-C\big(K^n(x^0)\big)\|\cr 
&+\|C\big(K^n(x^0)\big)-C\big(C^n(x^0)\big)\|,
\end{align}
which implies,
\begin{equation}\label{eq:4}
\|E^{n}\|\leq \delta_1+r^n+L_C\|E^{n-1}\|. 
\end{equation}
Similarly, using Lemma \ref{lem1} and  noting that  $\limsup_{i}r^i$ is $\delta_2$, we get that $\|E^{n}\|$ eventually becomes less than $(\delta_1+\delta_2)/(1-L_C)$. Therefore, as $n\rightarrow \infty$, the sequences $Z^n(x^0)$ and $K^n(x^0)$ converge to the sets $N_{\delta_2/(1-L_C)}(x^*)$ and $N_{(\delta_2+\delta_1)/(1-L_C)}(x^*)$, respectively, where $x^*$ is a fixed point of $C$. If in addition we assume that $Z^n(x^0)$ converges to $\Tilde{x}$, then by the geometry of the problem, it should be easy to see that the sequence $K^n(x^0)$ must converge to the set $N_{(2\delta_2+\delta_1)(1-L_C)}(\Tilde{x})$.

2) We prove this part in three steps. Firstly, using the Lipschitz continuity of $K$, we can write 
\begin{align}
\|K\big(K^{m-1}(x)\big)\!-\!Z\big(Z^{m-1}(x)\big)\|\leq L_K\|K^{m-1}(x)\!-\!Z^{m-1}(x)\|&\cr+\delta_1.&
\end{align}
By expanding this inequality recursively, we obtain
\begin{equation}
\|K\big(K^{m-1}(x)\big)-Z\big(Z^{m-1}(x)\big)\|\leq \frac{1-{L_K}^{(m)}}{1-{L_K}}\delta_1.
\end{equation}

Secondly, using the triangle inequality, we can write 
    \begin{align}
\|K^{m(n+1)}(x)-C^{n+1}(x)\|&\leq \|K^{m(n+1)}(x)-Z^{m}(K^{mn}(x))\|\cr 
&+\|Z^{m}\big(K^{mn}(x)\big)-C\big(K^{mn}(x)\big)\|\cr 
&+\|C\big(K^{mn}(x)\big)-C\big(C^n(x)\big)\|.
\end{align}
Thus, using the first step together with Lemma \ref{lem1}, we obtain
\begin{align}\label{eq:k-c-l-m}
\|K^{m(n+1)}(x)-C^{n+1}(x)\|\leq \frac{1-{L_K}^{(m)}}{(1-L_C)(1-{L_K})}\delta_1. 
\end{align}

Finally, we can write 
    \begin{align}
&\|K^{m(n+1)+r}(x)-C^{(n+1)+1}(x)\| \leq \cr &\|K^r\big(C^{n+1}\big)(x)-C\big(C^{n+1}(x)\big)\| +\cr
&\|K^r\big(K^{m(n+1)}\big)(x)-K^r\big(C^{n+1}\big)(x)\|.
\end{align}
    Using the Lipschitz continuity of $K^r$ and \eqref{eq:k-c-l-m}, the right hand side of the above inequality as $n\rightarrow \infty$ becomes
    \begin{equation}
    R_r=L_{K^r}\frac{1-{L_K}^{(m)}}{(1-L_C)(1-{L_K})}\delta_1+\|K^r(\Tilde{x})-\Tilde{x}\|,
    \end{equation}
which completes the proof.
\end{proof}

In fact, let $m$ be an integer for which $Z^m$ is a contractive map. Using Theorem \ref{Th:1} repeatedly for different values of $0<r<m$, one can see that the perturbed map $K$ will converge to a neighborhood of the limit point of $Z^m$ with a radius of at most $\max_{0<r<m} R_r$. Finally, we note that as $D'$ is a subset of the bounded set $D$, in the worst-case scenario, one can use Theorem \ref{Th:1} by taking $D'=D$. However, for certain dynamics, a smaller set for $D'$ is known. We shall discuss this point in the subsequent sections.

\subsection{Better/Best Response Dynamics in Discrete-Action Sequential Games}\label{disc_act_sp_seq_ga}
The contraction analysis in Theorem \ref{Th:1} relies on the existence of a bounded set $D'$. This section focuses on  better/best response dynamics in discrete-action sequential games, where the dynamics improve a single player's utility at each step. We then derive some sufficient conditions to guarantee the existence of the suitable set $D'\subset D$ for such a class of dynamics,  giving us better results compared to the case when $D'=D$.

\begin{definition}{\normalfont (Better and Best Response Dynamics)}\label{Bet_Bes_Dyn}
Given a game $G=(\mathcal{M},\{\mathcal{A}_i,u_i\}_{i\in\mathcal{M}})$, an action update process $\{a^t\}_{t\ge 1}$ for the players is called the \emph{best response} (BR) dynamics if  any arbitrary player $i$ is chosen infinitely many times and updates her action to $a^t_i\in \arg\max_{a_i\in \mathcal{A}_i}u_i(a_i,a^{t-1}_{-i})$. The dynamics are called \emph{better response} if a weaker condition, that is $u_i(a^t_i,a^{t-1}_{-i})>u_i(a^{t-1})$, is satisfied.
\end{definition}

In practice, players who are not eligible to improve their utilities will not change the action profile. Therefore, at each time step $t=1,2,\ldots$, we can consider only those players who are going to update their actions. Note that the above definition is slightly different from that given in \cite{C9}, in which the process of selecting players is done at random. In this paper, we allow arbitrary selection of players, which also includes the random selection of players. Next, we introduce the concept of upper semicontinuity \cite{C9}, which will be useful for our subsequent analysis.  

\begin{definition}{\normalfont (Upper Semicontinuous Correspondence)}\label{def:2}
A correspondence $g: X\rightrightarrows Y$ is
upper semicontinuous at $\hat{x}$, if for any open neighborhood $V$ of $g(\hat{x})$ there exists a neighborhood $U$ of $\hat{x}$ such that $g(x)\subset V$ for all $x\in U$. We say $g$ is upper semicontinuous, if it is
upper semicontinuous at every point in its domain and $g(x)$ is a compact set for each $x\in X$.
Alternatively, when $Y$ is compact, $g$ is upper semicontinuous if its graph $\{(x,y) | x\in X, y\in g(x)\}$ is a closed set.
\end{definition}

In the following we describe a tighter characterization for $D'$ in the statement of Theorem \ref{Th:1} in terms of $\epsilon$-NE set for the special case of better/best response dynamics.

\begin{prop}\label{prop:2}
In a finite discrete action space, every path of play either converges to a single point or enters a set of closed paths (cycles).
\end{prop}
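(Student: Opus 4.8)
The plan is to exploit the finiteness of the joint action space and argue purely combinatorially about which profiles the path of play occupies in the long run; no metric or continuity hypotheses are needed. Since each action set $\mathcal{A}_i$ is finite, the joint action set $\mathcal{A}=\prod_{i=1}^N\mathcal{A}_i$ is a finite set, and any path of play $\{a^t\}_{t\ge 1}$ is an infinite sequence taking values in it. First I would introduce the \emph{recurrent set} $R:=\{a\in\mathcal{A}:\ a^t=a \text{ for infinitely many } t\}$, i.e., the collection of action profiles visited infinitely often along the path. Because the sequence is infinite while $\mathcal{A}$ is finite, the pigeonhole principle immediately gives $R\neq\emptyset$.

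Next I would show the path is eventually confined to $R$. By definition every profile in $\mathcal{A}\setminus R$ is visited only finitely often, and since $\mathcal{A}\setminus R$ is itself finite, there is a finite time $T$ after which the path never revisits any profile outside $R$, that is, $a^t\in R$ for all $t\ge T$. This reduces the study of the asymptotic behavior to the tail $\{a^t\}_{t\ge T}$, which lives entirely inside the finite set $R$ and visits each of its elements infinitely often.

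The dichotomy in the statement then follows by splitting on the cardinality of $R$. If $|R|=1$, the tail is constant at the unique recurrent profile, so the path converges to a single point; since every player is selected infinitely often and plays a better/best response, this point admits no profitable unilateral deviation and is therefore a fixed point (a NE). If $|R|\ge 2$, the path returns to each element of $R$ infinitely often while remaining inside $R$, so consecutive visits to any fixed $a\in R$ are joined by finite trajectory segments that begin and end at $a$ and stay within $R$, i.e., closed walks in the one-step transition graph restricted to $R$. Hence the path is eventually trapped in $R$ and keeps traversing such closed walks, which is exactly the assertion that it enters a set of closed paths (cycles).

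The step I expect to be the main obstacle, and would treat most carefully, is formalizing the ``closed paths (cycles)'' conclusion in the case $|R|\ge 2$: I must argue that confinement to the finite set $R$ together with infinitely many returns to each recurrent state forces the trajectory to decompose into closed walks in $R$, and I should note that the better/best response structure rules out the degenerate self-reversal $a\to b\to a$ by a single player $i$, since that would require both $u_i(b_i,a_{-i})>u_i(a_i,a_{-i})$ and $u_i(a_i,a_{-i})>u_i(b_i,a_{-i})$ against the same opponents' profile, a contradiction. One should also be explicit that ``entering a set of cycles'' is asserted as eventual confinement with recurrent cycling and is not being claimed to be exact periodicity, since an arbitrary player-selection order may produce a non-periodic but recurrent tail.
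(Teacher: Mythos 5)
Your proof is correct and follows essentially the same route as the paper's: both rest on the pigeonhole argument in the finite set $\mathcal{A}$, the observation that non-recurrent profiles are visited only finitely often, and the fact that consecutive visits to a recurrent profile bound a closed path. Your version is in fact organized a bit more carefully (via the recurrent set $R$ and the dichotomy on $|R|$), and your caveat that the conclusion means eventual confinement with recurrent cycling rather than exact periodicity is a faithful reading of what the paper's proof actually establishes.
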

\begin{proof}
Any path of play (a sequence of action profiles obtained by repeatedly applying map $K$ on itself) in a finite discrete-action space either enters a set of cycles or converges to a fixed point. This is true because, when applying map $K$ to itself, the path of play must intersect itself at some point. Thus, if the path never converges to a fixed point, it can be repeated; in this case, a collection of cycles must emerge in the action space. Any point in the action space that is not on one of these cycles is only chosen a finite number of times. Otherwise, a closed path between two successive selections of that point exists. Therefore, that point must belong to a cycle among the convergent set of cycles. This completes the proof.
\end{proof}


\begin{thm}\label{Th:2}
Let $\Gamma$ and $\Gamma^M$ denote the original and perturbed games as defined in Section \ref{Se_Nom}. Moreover, assume that $\Gamma$ is a potential game and $d(\Gamma,\Gamma^M)\leq \delta$ for some $\delta>0$. Let $K:\mathcal{A}\rightarrow \mathcal{A}$ be a better response map for the perturbed game $\Gamma^M$, where $\mathcal{A}$ denotes players' discrete-action finite set. Then, for initial actions $a^0\in \mathcal{A}$, the sequence $K^n(a^0)$ converges either to a \emph{NE} of $\Gamma^M$ or a cycle inside the $\delta|\mathcal{A}|$-\emph{NE} set $\mathcal{X}_{\delta|\mathcal{A}|}$.
\end{thm}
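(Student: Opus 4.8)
The plan is to prove the two alternatives separately, reducing to them via Proposition \ref{prop:2}, and to transport the perturbation bound through the potential function of the unperturbed game $\Gamma$. First I would apply Proposition \ref{prop:2} to the orbit $\{K^n(a^0)\}$: since $\mathcal{A}$ is finite, this orbit either converges to a single profile or is eventually absorbed into a cycle. In the convergent case, the limit $a^\infty$ is a fixed point of the better-response map, so by Definition \ref{Bet_Bes_Dyn} no player has a strictly improving deviation in $\Gamma^M$ at $a^\infty$ (here the assumption that every player is selected infinitely often is used), i.e. $u_i^M(a'_i;a^\infty_{-i})\le u_i^M(a^\infty)$ for all $i$ and all $a'_i$, so $a^\infty$ is a NE of $\Gamma^M$. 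This settles the first alternative, and the whole difficulty is concentrated in the cyclic case.

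For the cyclic case the key device is the potential $\phi$ of $\Gamma$ from Definition \ref{def:1}. Each step of the dynamics is a better-response move in $\Gamma^M$, so the moving player $i$ gains $d_{(a'_i,a)}^{\Gamma^M}>0$; combining this with $d(\Gamma,\Gamma^M)\le\delta$ from Definition \ref{def_mpd} gives $d_{(a'_i,a)}^{\Gamma}\ge d_{(a'_i,a)}^{\Gamma^M}-\delta>-\delta$, and since $\Gamma$ is a potential game this difference equals the increment $\phi(a'_i,a_{-i})-\phi(a_i,a_{-i})$. Hence $\phi$ decreases by strictly less than $\delta$ at every step. I would then fix a cycle $C\subseteq\mathcal{A}$ of length $L\le|\mathcal{A}|$; because the net change of $\phi$ around the closed cycle is zero and no single step lowers $\phi$ by as much as $\delta$, one may pass from the potential-maximizing profile of $C$ to the potential-minimizing one along the cycle in at most $L-1$ steps, which yields the oscillation bound $\max_{b\in C}\phi(b)-\min_{b\in C}\phi(b)<(L-1)\delta\le(|\mathcal{A}|-1)\delta$.

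It remains to turn this small potential oscillation into the $\epsilon$-NE estimate of Definition \ref{epsilon-eq} for every profile on $C$, with $\epsilon=\delta|\mathcal{A}|$. Take $a\in C$, a player $i$, and a deviation $a'_i$, and set $a'=(a'_i,a_{-i})$. If the deviation is not improving in $\Gamma^M$ then $d_{(a'_i,a)}^{\Gamma^M}\le 0\le\delta|\mathcal{A}|$ and there is nothing to prove. If it is improving, I would argue that its target $a'$ must itself lie on $C$: since every player is selected infinitely often and $a$ is visited infinitely often, this improving move is exercised infinitely often, so $a'$ is recurrent and therefore belongs to the recurrent set $C$. Granting this, $\phi(a')-\phi(a)\le\max_{b\in C}\phi(b)-\min_{b\in C}\phi(b)<(L-1)\delta$, and converting back through the maximum pairwise difference gives $d_{(a'_i,a)}^{\Gamma^M}\le\bigl(\phi(a')-\phi(a)\bigr)+\delta<L\delta\le\delta|\mathcal{A}|$. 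Thus every $a\in C$ satisfies the defining inequality of $\mathcal{X}_{\delta|\mathcal{A}|}$, which is the claim.

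The step I expect to be the main obstacle is exactly the closure property invoked in the last paragraph, namely that an improving deviation from a cyclic profile cannot escape the cycle. This is where the modelling hypothesis that every player acts infinitely often carries real weight: if an improving move out of $a$ were taken only finitely often the argument would break down, so I would need to pin down the selection protocol precisely enough that at each recurrent profile every available improving move is eventually realized, and then check that absorbing such a target into $C$ remains consistent with the $(|\mathcal{A}|-1)\delta$ potential window rather than forcing it open. Everything else—the dichotomy, the per-step potential drop, and the oscillation bound—is routine once $\phi$ is brought into play.
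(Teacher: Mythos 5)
Your overall skeleton is the same as the paper's: the dichotomy via Proposition~\ref{prop:2}, the fixed-point case yielding a NE of $\Gamma^M$, and the use of the potential $\phi$ around the cycle are all exactly how the paper proceeds. Your per-step estimate ($\phi$ drops by less than $\delta$ at each better-response step) and the resulting oscillation bound $\max_{b\in C}\phi(b)-\min_{b\in C}\phi(b)<(L-1)\delta$ are correct, and are an equivalent repackaging of the paper's computation, which sums $\big(u^M_{n_r}(a^r)-u^M_{n_r}(a^{r-1})\big)-\big(\phi(a^r)-\phi(a^{r-1})\big)\le\delta$ around the cycle to get $\alpha_r<\pi\delta$. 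The genuine gap is precisely the step you flagged yourself: the ``closure property'' that an improving deviation $(a'_i,a_{-i})$ from a cycle point $a$ must lie on the cycle. Your justification---that since $a$ recurs and every player is selected infinitely often, the improving move is ``exercised infinitely often''---is false under the paper's definition of the dynamics. Once the path is eventually periodic, the only moves ever exercised are the cycle's own moves. The hypothesis that every player is chosen infinitely often does not force player $i$ to be chosen \emph{at the profile} $a$: she can be chosen only at other cycle points (where she may have no improving move and hence stays put), and in better-response dynamics even a selection at $a$ may be answered with a different, small improving move. One can construct a potential game plus a $\delta$-perturbation admitting a valid better-response cycle in which some player has an arbitrarily large improving deviation at a cycle point yet is never the player selected there; recurrence of $a$ therefore tells you nothing about recurrence of the move $a\to a'$, and the final inequality $d^{\Gamma^M}_{(a'_i,a)}<L\delta$ cannot be obtained by this route.

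The paper closes this hole differently, and this is the only substantive divergence between the two arguments: rather than reasoning inside the realized path, it exploits the arbitrariness of the selection rule \emph{counterfactually}. For each cycle point $a^r$ it considers the admissible alternative path that, upon reaching $a^r$, selects the player and action achieving $\max_{i,q_i}u^M_i(q_i,a^r_{-i})-u^M_i(a^r)$; since this is again a legitimate better/best response path, it again enters a cycle, and the same telescoping bound applied to \emph{that} cycle bounds the maximal deviation gain by $\pi'\delta\le|\mathcal{A}|\delta$. In short, where you try to prove that the deviation target is recurrent for the given path, the paper changes the path so that the deviation is actually taken. You may fairly observe that the paper's step is itself terse---it asserts rather than proves that the alternative continuation cycles back through the move out of $a^r$---so your instinct that this is the true crux of Theorem~\ref{Th:2} is sound; but as written, your recurrence argument is not a valid substitute for it.
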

\begin{proof}
As Proposition~\ref{prop:2} indicates, every path of play either converges to a single point or to a set of cycles. If it converges to a fixed point, there is nothing to prove as it must be a NE of the game $\Gamma^m$ given the definition of the better response map $K$. Therefore, we only focus on a path of play that enters a  set of cycles, i.e., there are $l_0$ and $\pi$ such that $\big(K^{l_0}(a^0),K^{l_0+1}(a^0),...,K^{l_0+\pi-1}(a^0)\big)$ forms a closed path with $K^{l_0+\pi}(a^0)=K^{l_0}(a^0)$. Since for every $n$, $K^n(a^0)\in\mathcal{A}$, we can represent it by a sequence of action profiles $K^{l_0+t}(a^0)=a^{t+1}, t\in\{0,1,...,\pi-1\}$. Let $u_{n_{r}}^M$ indicate the utility of agent $n_{r}$ who is selected at step $r$ to improve its utility by moving from $a^{r-1}$ to $a^{r}$. Then, $\alpha_r:=u_{n_{r}}^M(a^r)-u_{n_{r}}^M(a^{r-1})>0$, and we can write 
\begin{equation}\label{eq:5}
\sum_{r=1}^{\pi}\big(u_{n_{r}}^M(a^r)-u_{n_{r}}^M(a^{r-1})\big)=\sum_{r=1}^{\pi}\alpha_r>\alpha_r, r\in\{1,2,...,\pi\}.
\end{equation}
Additionally, we know that in the corresponding original game $\Gamma$, which is assumed to be a potential game, we have
\begin{equation}\label{eq:6}
\sum_{r=1}^{\pi}\big(u_{n_{r}}(a^r)-u_{n_{r}}(a^{r-1})\big)=\sum_{r=1}^{\pi}\big(\phi(a^r)-\phi(a^{r-1})\big)=0.
\end{equation}
Subtracting (\ref{eq:6}) from (\ref{eq:5}), for any $r\in\{1,...,\pi\}$ we have
\begin{align}\label{pi_delta_eq}
 \pi \delta\geq \sum_{r=1}^{\pi}\Big(\left(u_{n_{r}}^M(a^r)-u_{n_{r}}^M(a^{r-1})\right)-\left(\phi(a^r)-\phi(a^{r-1})\right)\Big) \nonumber \\
 >\alpha_r .
\end{align}
Since players in the better/best response dynamics are chosen arbitrarily among the eligible set of players who can improve their utility, for every $a^r$ from this cycle the dynamics can potentially lead to another cycle such that $a^{r+1}$ with $\alpha_{r+1}=\max_{i,q_i}u_{i}^M(q_i,a^r_{-i})-u_{i}^M(a^{r})$ for some $\pi$. This means $a^r$ is a $\pi\delta$-NE of the game $\Gamma^M$. Finally, since $\pi\leq|\mathcal{A}|$, for all $r\in \{1,2,...,\pi\}$, $a^r\in \mathcal{X}_{\pi\delta}\subseteq \mathcal{X}_{|\mathcal{A}|\delta}$, which means the path of play converges to the set of $|\mathcal{A}|\delta$-NE of the game $\Gamma^M$. This completes the proof.
\end{proof}



\begin{rem}
The result of Theorem \ref{Th:2} is valid if $|\mathcal{A}|<\infty$. That is the main reason that in this section, we restricted our attention to games with discrete finite action spaces.
\end{rem}

 Using Theorem \ref{Th:2}, we conclude that regardless of choosing arbitrarily large action space $\mathcal{A}$, if there are finitely many pure equilibria for the games $\Gamma^M$ and $\Gamma$, the sequences $K^n(x^0)$ and $Z^n(x^0)$ will eventually converge to some bounded domain $D'\subset \mathbb{R}^N$. This enables us to apply Theorem \ref{Th:1} on mappings $K$ and $Z$ and compare their long-term behavior. 

In fact, we can compare our results with those in \cite{C9} for the special better/best response dynamics considered in this sub section and say this result is similar to Theorem 3.1 in \cite{C9} but with a weaker assumption yields to a more general setting. In other words, the result of Theorem \ref{Th:2} cover that of Theorem 3.1 in \cite{C9} when the probability of selecting players to response to other players action is fixed with time.

It was shown in \cite{C9} that the better/best response dynamics in both the perturbed and original games will eventually converge to $\delta |\mathcal{A}|$-NE. However, that work only provides information on the difference between those dynamics outcomes in the perturbed and the original games in terms of $\epsilon$- approximate Nash equilibrium. By analyzing the map of better/best response dynamics, in Theorem \ref{Th:1} with the aid of supplemental subsection \ref{disc_act_sp_seq_ga} in this work, in fact we stepped forward and we can quantify the distance between the convergence properties of those algorithms in the corresponding games. Therefore we would obtain tighter characterization of the limit behavior of better/best response dynamics in discrete action space sequential games. 

In the following utilizing Theorem \ref{Th:1} as a foundation with the aid of subsequent subsections as supplements, we provide an analysis for better/best response dynamics in repeated games which is unprecedented. In fact, having a look at a number of techniques from \cite{C9} for analysis of fictitious play dynamics and developing them to apply on better/best response dynamics in repeated games, we would be able to obtain an appropriate candidate of set $D'$ which would be useful in applying the Theorem \ref{Th:1}.

\subsection{Simultaneous Update Dynamics in Repeated Games with Discrete or Continuous Action Space} \label{Sub_sec_rep_game}

In this section, we consider an extension of our previous results to study the asymptotic behavior of simultaneous update dynamics in repeated games. More precisely, we assume that at each iteration of the dynamics, all the players simultaneously respond to the action profile at the previous time step. In particular, we are interested in comparing the asymptotic behavior of such simultaneous dynamics in the original and perturbed games. As before, we denote an action profile of players by $a=(a_1,\ldots,a_n)$, and a sequence of action profiles generated by the simultaneous dynamics by $\{a^t, t=1,2,\ldots\}$.



In order to continue with best response dynamics in repeated games let us define the following function. We assume for now that the action space is discrete and will provide insights into the continuous action space case toward the end of the section.
\begin{equation}\label{25}
U(f)=\sum_{a\in \mathcal{A}}\phi(a)f_1(a_1)...f_N(a_N),
\end{equation}
where $f=(f_1,f_2,...,f_N)$ and $f_j(.):\mathcal{A}_j\rightarrow [0,1]$. This function $U(f)$ has several properties such as continuity and differentiability with respect to $f$. Let us define a sequence $\{f^t\}_{t=1}^{\infty}$ such that 
\begin{align}
f_j^t=\begin{cases} 1 & \mbox{if} \ a_j=x^t_j\\
0 & \mbox{otherwise}.
\end{cases} 
\end{align}
As a result, one may write $U(f^t)=\phi(x^t)$. Defining 
\begin{align}\nonumber
&U(a_j,f_{-j})=\cr 
&\sum_{a_{-j}\in \mathcal{A}_{-j}} \phi(a_{j},a_{-j})f_1(a_1)...f_{j-1}(a_{j-1})f_{j+1}(a_{j+1})...f_{N}(a_{N}) 
\end{align}
we can write Taylor series expansion for $U(f^{t+1})$ around $f^{t}$ as follows 
\begin{align}
&U(f^{t+1})=U(f^{t})+\cr 
&\sum_{m=1}^N\sum_{a_{m}\in \mathcal{A}_{m}}\big(f^{t+1}_m(a_m)-f^{t}_m(a_m)\big)U(a_m,f_{-m})+\cr
&O(\|f^{t+1}-f^{t}\|^2).
\end{align}
Clearly, based on (\ref{25}) we have
\begin{equation}
\sum_{a_{m}\in \mathcal{A}_{m}}f^{t+1}_m(a_m)U(a_m,f^t_{-m})=U(f^{t+1}_{m},f^t_{-m}),
\end{equation}
and 
\begin{equation}
\sum_{a_{m}\in \mathcal{A}_{m}}f^{t}_m(a_m)U(a_m,f^t_{-m})=U(f^{t}_{m},f^t_{-m}).
\end{equation}
In a repeated game where every agent respond to the previous state, they will logically adopt the best response strategy if they are rational. Therefore, we focus on the best response dynamics in repeated games. The following lemma provides a lower bound on the amount by which the potential function of the potential game which is close to the original game improves for moving from $t$-th element to $(t+1)$-th element of sequence $\{x^{t}\}_{t=1}^\infty$.
\begin{lemma}\label{lem6}
Suppose that $K$ is the map of best response dynamics in a repeated game as defined in (\ref{Map_K}), and $\phi$ is the potential function of a close potential game to the underlying game to which map $K$ is associated. Assume that the $MPD$ of these two games is some $\delta\geq 0$. If $x^t$, which is defined in (\ref{Map_K}), is outside of an $\epsilon$-equilibrium, we have
\begin{equation}
\phi(x^{t+1})-\phi(x^{t})\geq \epsilon -N\delta+O(\|f^{t+1}-f^{t}\|^2). 
\end{equation}
\end{lemma}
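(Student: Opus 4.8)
The plan is to start from the first-order expansion of $U$ already derived in the excerpt and read off the one-step change of the potential as a sum of single-player deviation gains. Combining the Taylor expansion of $U(f^{t+1})$ around $f^t$ with the two identities $\sum_{a_m}f^{t+1}_m(a_m)U(a_m,f^t_{-m})=U(f^{t+1}_m,f^t_{-m})$ and $\sum_{a_m}f^t_m(a_m)U(a_m,f^t_{-m})=U(f^t_m,f^t_{-m})$, together with $U(f^t)=\phi(x^t)$, I would obtain
\begin{equation}
\phi(x^{t+1})-\phi(x^t)=\sum_{m=1}^N\big(U(f^{t+1}_m,f^t_{-m})-U(f^t_m,f^t_{-m})\big)+O(\|f^{t+1}-f^t\|^2).
\end{equation}
The key observation is that each summand is a single-player deviation gain evaluated on the potential: since $f^t_{-m}$ is the indicator of $x^t_{-m}$ and $f^{t+1}_m$ is the indicator of $x^{t+1}_m$, one has $U(f^{t+1}_m,f^t_{-m})=\phi(x^{t+1}_m,x^t_{-m})$ and $U(f^t_m,f^t_{-m})=\phi(x^t)$, so each term equals $\phi(x^{t+1}_m,x^t_{-m})-\phi(x^t)$, the gain in $\phi$ when only player $m$ switches to its time-$(t+1)$ action while everyone else stays at $x^t$.

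Second, I would convert each potential deviation gain into the corresponding perturbed-utility deviation gain using the $MPD$ hypothesis. Viewing the close potential game as the game $\hat{G}$ (whose utility differences coincide with differences of $\phi$) and the game driving $K$ as $\Gamma^M$ (utilities $u^M_i$), Definition \ref{def_mpd} applied to the deviation of player $m$ from $x^t_m$ to $x^{t+1}_m$ against $x^t_{-m}$ yields
\begin{equation}
\phi(x^{t+1}_m,x^t_{-m})-\phi(x^t)\ \geq\ \big(u^M_m(x^{t+1}_m,x^t_{-m})-u^M_m(x^t)\big)-\delta.
\end{equation}
Because $K$ is the simultaneous best-response map, $x^{t+1}_m\in\arg\max_{a_m}u^M_m(a_m,x^t_{-m})$ for every $m$, so each perturbed-utility gain $u^M_m(x^{t+1}_m,x^t_{-m})-u^M_m(x^t)$ is nonnegative; hence every summand is bounded below by $-\delta$.

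Third, I would invoke the hypothesis that $x^t$ lies outside the $\epsilon$-equilibrium set of $\Gamma^M$: by Definition \ref{epsilon-eq} there is at least one player $i$ with $\max_{a_i}u^M_i(a_i,x^t_{-i})-u^M_i(x^t)>\epsilon$, and since $x^{t+1}_i$ attains this maximum the $i$-th summand is at least $\epsilon-\delta$. Summing the $N$ terms---one bounded below by $\epsilon-\delta$ and the remaining $N-1$ each by $-\delta$---gives $\sum_{m=1}^N\big(\phi(x^{t+1}_m,x^t_{-m})-\phi(x^t)\big)\geq\epsilon-N\delta$, and substituting this into the Taylor identity of the first step completes the proof.

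The step I expect to be most delicate is the bookkeeping in the second and third paragraphs: one must ensure the $\epsilon$-equilibrium is taken with respect to the perturbed game $\Gamma^M$ that actually generates the dynamics, so that the improving action guaranteed by the hypothesis coincides exactly with the best response $x^{t+1}_i$ used by $K$, and that the $MPD$ bound is applied with the correct sign to all players simultaneously. The $O(\|f^{t+1}-f^t\|^2)$ remainder is carried through unchanged from the given expansion and needs no separate estimate.
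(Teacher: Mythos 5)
Your proof is correct and follows essentially the same route as the paper's: the same first-order Taylor expansion of $\phi$ into single-player deviation gains $\phi(x^{t+1}_m,x^t_{-m})-\phi(x^t_m,x^t_{-m})$, the same use of the $MPD$ bound to replace each potential gain by the perturbed-utility gain minus $\delta$, and the same split into one player improving by at least $\epsilon$ (from the out-of-$\epsilon$-equilibrium hypothesis) while the other $N-1$ best-responding players improve by at least $0$. Your explicit note that the $\epsilon$-equilibrium must be taken with respect to the perturbed game $\Gamma^M$ generating $K$ is exactly how the paper's proof implicitly uses the hypothesis.
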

\begin{proof}
We can write 
\begin{align}\label{26}
\phi(x^{t+1})=\phi(x^{t})+\sum_{m=1}^N \phi(x^{t+1}_{m},x^t_{-m})-\phi(x^{t}_{m},x^t_{-m})&\nonumber\\
+O(\|f^{t+1}-f^{t}\|^2).&
\end{align}
Since by the definition of $MPD$ for every $m\in \mathcal{M}$ we have 
\begin{align}
u^M_m(x^{t+1}_{m},x^t_{-m})-u^M_m(x^{t}_{m},x^t_{-m})-\big[ \phi(x^{t+1}_{m},x^t_{-m})- & \nonumber \\ \phi(x^{t}_{m},x^t_{-m})\big]
\leq \delta,&
\end{align}
we can write 
\begin{align}\label{27}
&\phi(x^{t+1})\geq \phi(x^{t})+\cr
&\sum_{m=1}^N \big[u^M_m(x^{t+1}_{m},x^t_{-m})-u^M_m(x^{t}_{m},x^t_{-m})-\delta\big]+\cr
& O(\|f^{t+1}-f^{t}\|^2).
\end{align}
By the assumption of the lemma for every $m\in\{1,2,...,N\}$,
\begin{equation}
u^M_m(x^{t+1}_{m},x^t_{-m})-u^M_m(x^{t}_{m},x^t_{-m})\geq 0.
\end{equation}
Moreover, since $x^t$ is outside of an $\epsilon$-equilibrium, for some $m_0$ we can write

\begin{equation}
u^M_{m_0}(x^{t+1}_{m_0},x^t_{-m_0})-u^M_{m_0}(x^{t}_{m_0},x^t_{-m_0})\geq \epsilon.
\end{equation}
Thus,
\begin{equation}\label{28}
\phi(x^{t+1})\geq \phi(x^{t})+\epsilon-N\delta+O(\|f^{t+1}-f^{t}\|^2).
\end{equation}
This completes the proof.
\end{proof}
In Lemma \ref{lem6}, $f^{t+1}$ and $f^{t}$ are vectors of length $n$ of which elements belong to $\{0,1\}$. Let us introduce $\{k^t\}_{t=1}^{\infty}$ and $\{w^t\}_{t=1}^{\infty}$ such that for every $t$ we have $k^t=O(\|f^{t+1}-f^{t}\|^2)$ and $w^t=\|x^{t+1}-x^{t}\|$. By construction $\{k^t\}_{t=1}^{\infty}$ is a bounded sequence. Assuming that we can guarantee $\{w^t\}_{t=1}^{\infty}$ is a bounded sequence as well. We can then claim that $\limsup_{t}k^t$ and $\limsup_{t}w^t$ exist. The following assumption includes some conditions required for the rest of our analysis to be able to give some results for best response dynamics in both of discrete and continuous action spaces.

\begin{assum}\label{Cont_Int_Util}
Functions $u_i(a), u^M_i(a), i\in \mathcal{M}$ that are defined over action space $\mathcal{A}$ in \eqref{eq_1} are Lipschitz continuous over $Conv(\mathcal{A})$.
\end{assum}

The following lemma and the ensuing theorem provide a more precise characterization of the limiting behaviour of the dynamics associated with a repeated game in the form introduced earlier.

\begin{lemma}\label{lem7}
Let $\alpha\geq 0$, $\theta\geq 0$ be given and Assumption~\ref{Cont_Int_Util} holds. There is some $L$ such that if $\|x-y\|\leq \theta$ and $x\in \mathcal{X}_\alpha$ then $y\in \mathcal{X}_{\alpha+L\theta}$.
\end{lemma}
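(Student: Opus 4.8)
The plan is to show that the $\epsilon$-equilibrium property degrades at most linearly in the displacement $\theta$, with slope controlled by the Lipschitz constant furnished by Assumption~\ref{Cont_Int_Util}. Let $L_0$ be a common Lipschitz constant for all the utilities $u_i$ over $Conv(\mathcal{A})$ (take the maximum over $i\in\mathcal{M}$ of the individual constants), and I will establish the claim with $L=2L_0$. A preliminary observation I would invoke is that $Conv(\mathcal{A})=\prod_{i}Conv(\mathcal{A}_i)$, so that any hybrid profile formed by mixing coordinates of $x$, $y$, and an alternative action $a_i'\in\mathcal{A}_i$ still lies in $Conv(\mathcal{A})$, where the Lipschitz bound is in force.

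Next I would fix a player $i$ and an arbitrary deviation $a_i'\in\mathcal{A}_i$; by Definition~\ref{epsilon-eq}, it suffices to bound $u_i(a_i';y_{-i})-u_i(y_i;y_{-i})$ by $\alpha+L\theta$ uniformly over $i$ and $a_i'$. The key step is to telescope this regret through the reference profiles $(a_i';x_{-i})$ and $(x_i;x_{-i})$:
\begin{align*}
u_i(a_i';y_{-i})-u_i(y_i;y_{-i})
&=\big[u_i(a_i';y_{-i})-u_i(a_i';x_{-i})\big]\\
&\quad+\big[u_i(a_i';x_{-i})-u_i(x_i;x_{-i})\big]\\
&\quad+\big[u_i(x_i;x_{-i})-u_i(y_i;y_{-i})\big].
\end{align*}
The middle bracket is at most $\alpha$ because $x\in\mathcal{X}_\alpha$ and $a_i'$ is a legitimate unilateral deviation at $x$. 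The first bracket compares two profiles differing only in their $-i$ coordinates, so Lipschitz continuity bounds it by $L_0\|y_{-i}-x_{-i}\|\leq L_0\|y-x\|\leq L_0\theta$; the third bracket compares exactly $x$ and $y$, giving $L_0\|x-y\|\leq L_0\theta$. Summing the three contributions yields $u_i(a_i';y_{-i})-u_i(y_i;y_{-i})\leq\alpha+2L_0\theta$, and since this estimate is independent of the choice of $i$ and $a_i'$, taking $L=2L_0$ gives $y\in\mathcal{X}_{\alpha+L\theta}$.

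This is at heart a routine stability estimate, and I do not anticipate a genuine obstacle. The one point deserving care is domain membership: when $y$ belongs only to $Conv(\mathcal{A})$ rather than to $\mathcal{A}$ itself, one must confirm that the hybrid profiles (notably $(a_i';y_{-i})$) lie in the set on which Assumption~\ref{Cont_Int_Util} grants Lipschitz continuity, which is precisely where the product structure of the convex hull is needed. This is also what makes the argument apply verbatim in the continuous-action-space regime, where $a_i'$ ranges over a continuum and no finiteness of $\mathcal{A}_i$ is available; the bound is uniform in $a_i'$ and therefore insensitive to whether the deviation set is discrete or continuous.
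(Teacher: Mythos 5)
Your proof is correct, and it reaches the same constant as the paper ($L = 2L_0$, versus the paper's $L^0 = \max_{m}2L_{u_m}$), but the execution is genuinely different. The paper introduces the regret function $\nu(x)=-\max_{m\in\mathcal{M},\,p_m\in\mathcal{A}_m}\bigl(u_m(p_m,x_{-m})-u_m(x_m,x_{-m})\bigr)$, observes that $x\in\mathcal{X}_\epsilon$ iff $\nu(x)\geq-\epsilon$, and then establishes that $\nu$ is Lipschitz with constant $L^0$ by first bounding the Lipschitz constant of each $h_m(d,x)=u_m(d,x_{-m})-u_m(x_m,x_{-m})$ by $2L_{u_m}$ and then invoking an external result (Proposition 1.32 of Weaver's \emph{Lipschitz algebras}) stating that a pointwise supremum of a uniformly Lipschitz family is itself Lipschitz, together with compactness of $\mathcal{A}_m$ to replace the supremum by a maximum. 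Your three-term telescoping through $(a_i';x_{-i})$ and $(x_i;x_{-i})$ proves exactly the same $2L_0$-stability of the regret, but pointwise and uniformly in the deviation $a_i'$, so you never need to introduce $\nu$, never need the sup-of-Lipschitz-functions lemma, and never need the sup to be attained; this makes your argument more elementary and self-contained, and, as you note, indifferent to whether the deviation set is discrete or continuous. What the paper's route buys in exchange is a slightly stronger reusable fact — Lipschitz continuity of the regret function $\nu$ itself on $Conv(\mathcal{A})$ — rather than only the set-inclusion statement of the lemma. Your attention to the domain issue (hybrid profiles such as $(a_i';y_{-i})$ lying in $Conv(\mathcal{A})=\prod_i Conv(\mathcal{A}_i)$, where Assumption~\ref{Cont_Int_Util} applies) is a point the paper glosses over, and it is handled correctly in your write-up.
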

\begin{proof}
Let $\nu:Conv(\mathcal{A})\rightarrow \mathbb{R}$ be a function such that 
\begin{equation}
\nu(x)=-\max_{m\in \mathcal{M}, p_m\in \mathcal{A}_m}\big(u_m(p_m,x_{-m})-u_m(x_m,x_{-m})\big).
\end{equation}
It follows from the definition of $\epsilon$-equilibrium that a profile $x$ is an $\epsilon$-equilibrium if and only if $\nu(x)\geq -\epsilon$.
Since $u_m(p_m,x_{-m})$ and $u_m(x_m,x_{-m})$ are Lipschitz continuous in $x$, the term $u_m(p_m,x_{-m})-u_m(x_m,x_{-m})$ is also Lipschitz continuous in $x$. Let us write $h_m(d,x)=u_m(d,x_{-m})-u_m(x_m,x_{-m})$ for $d\in \mathcal{A}_m$ which is closed and bounded. Firstly, one observes that since $h_m(d,x)$ is Lipschitz continuous in $d$ and $\mathcal{A}_m$ is compact,
\begin{equation}
\sup_{d\in \mathcal{A}_m}h_m(d,x)=\max_{d\in \mathcal{A}_m}h_m(d,x).
\end{equation}

Secondly, define $L(.)$ as the Lipschitz constant of the function inside the argument and consequently $L(h_m(d,x))$ as the Lipschitz constant of $h_m(d,x)$. Utilizing the definition of Lipschitz continuity, we can write $L(h_m(d,x))\leq 2L_{u_m}$ where $L_{u_m}$ is the Lipschitz constant of $u_m(.)$. Thirdly, According to Proposition 1.32 of \cite{C27}, since $h_m(d,x)$ is not infinite anywhere on its domain,  for family $\{h_m(d,x)\}_d$ of Lipschitz continuous functions we can claim that $\sup_{d}h_m(d,x)$ is a Lipschitz function with 
\begin{equation}
L\big(\sup_{d}h_m(d,x)\big)\leq \sup_{d}L\big(h_m(d,x)\big).
\end{equation}
Considering all these three points together one can write
\begin{equation}
L\big(\max_{d\in \mathcal{A}_m}h_m(d,x)\big)\leq 2L_{u_m}.
\end{equation}
 Therefore, with the same argument we can claim $\nu(x)$ is Lipschitz continuous with Lipschitz constant less than or equal to $L^0=\max_{m\in \mathcal{M}}2L_{u_m}$. It follows that if $\|x-y\|\leq\theta$, then $\|\nu(x)-\nu(y)\|\leq L^0\theta$. Thus,
 \begin{equation}
-L^0\theta\leq \nu(y)-\nu(x)\leq L^0\theta.
 \end{equation}
 Since $x\in \mathcal{X}_\alpha$ we have $\nu(x)\geq -\alpha$. As a result, $\nu(y)\geq -L^0\theta-\alpha$, and consequently, $y\in \mathcal{X}_{L^0\theta+\alpha}$. This completes the proof.
\end{proof}

We note that using Assumption \ref{Cont_Int_Util}, since $\mathcal{A} \subset Conv(\mathcal{A})$, Lemma \ref{lem7} is valid for both discrete and continuous action spaces. 

In the following theorem, we introduce an invariant set which describes the asymptotic behavior of the best response dynamics in repeated games.
\begin{thm}\label{th:4}
Let $T_0>0$ be an arbitrary positive integer and Assumption~\ref{Cont_Int_Util} holds true. The sequence $\{x^t\}_{t=1}^{\infty}$ indicating the dynamics of a repeated game, that is generated by compositions of map $K$ on itself as defined in (\ref{Map_K}), converges to the set $C_{T_0,\epsilon}$ for any $\epsilon>0$. i.e. $\exists T_\epsilon>0$ such that for all $t \geq T_\epsilon \geq T_0$, 
denoting 
\begin{equation}
R_4=N\delta+sup_{t\geq T_0}k^t+\epsilon,
\end{equation}
and 
\begin{equation}
R_5=N\delta+sup_{t\geq T_0}k^t+L^0 sup_{t\geq T_0}w^t+\epsilon,
\end{equation}
where
\begin{equation}
 C_{T_0,\epsilon}=\{x\in Conv(\mathcal{A})| \phi(x)\geq min_{y\in \mathcal{X}_{R_5}}\phi(y)\}. 
\end{equation}
\end{thm}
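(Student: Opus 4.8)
The plan is to use the potential function $\phi$ as a Lyapunov function for the trajectory $\{x^t\}$ and to show that the superlevel set $C_{T_0,\epsilon}$ is both eventually reached and forward-invariant. Two ingredients drive the argument: Lemma \ref{lem6}, which guarantees a one-step increase of $\phi$ whenever the current iterate is far from being an approximate equilibrium, and Lemma \ref{lem7}, which bounds how much the equilibrium quality of a point can deteriorate after a single step of bounded size. Throughout I would use that, by Assumption \ref{Cont_Int_Util}, $\phi$ is (Lipschitz) continuous on the compact set $Conv(\mathcal{A})$, hence bounded above; that the sets $\mathcal{X}_\alpha$ are closed and nested, $\mathcal{X}_{\alpha}\subseteq \mathcal{X}_{\alpha'}$ for $\alpha\le \alpha'$, so that $\phi^\ast:=\min_{y\in \mathcal{X}_{R_5}}\phi(y)$ is attained; and that $w^t=\|x^{t+1}-x^t\|$ is bounded (e.g. by the diameter of $Conv(\mathcal{A})$), so the suprema appearing in $R_4$ and $R_5$ are finite.

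First I would establish the \emph{increase step}. Fixing the threshold at $R_4=N\delta+\sup_{t\ge T_0}k^t+\epsilon$, I apply Lemma \ref{lem6} with gap parameter $R_4$: for every $t\ge T_0$ with $x^t\notin \mathcal{X}_{R_4}$, the definition of $R_4$ is chosen precisely so that the $N\delta$ term and the second-order remainder $k^t$ are absorbed, leaving $\phi(x^{t+1})-\phi(x^t)\ge \epsilon>0$. Since $\phi$ is bounded above on $Conv(\mathcal{A})$, the iterates cannot satisfy $x^t\notin \mathcal{X}_{R_4}$ for all large $t$ (otherwise $\phi(x^t)\to\infty$); hence $x^t\in \mathcal{X}_{R_4}$ for infinitely many $t\ge T_0$. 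Picking the first such time $t^\ast\ge T_0$ and using $\mathcal{X}_{R_4}\subseteq \mathcal{X}_{R_5}$, we get $\phi(x^{t^\ast})\ge \min_{y\in \mathcal{X}_{R_5}}\phi(y)=\phi^\ast$, i.e. $x^{t^\ast}\in C_{T_0,\epsilon}$, so the set is reached.

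Next I would prove \emph{forward invariance}: if $t\ge T_0$ and $\phi(x^t)\ge \phi^\ast$, then $\phi(x^{t+1})\ge \phi^\ast$. I split into two cases. If $x^t\notin \mathcal{X}_{R_4}$, the increase step gives $\phi(x^{t+1})\ge \phi(x^t)+\epsilon\ge \phi^\ast$. If instead $x^t\in \mathcal{X}_{R_4}$, then Lemma \ref{lem6} guarantees no ascent, so I invoke Lemma \ref{lem7} with $\alpha=R_4$ and $\theta=w^t\le \sup_{s\ge T_0}w^s$: since $R_4+L^0 w^t\le R_4+L^0\sup_{s\ge T_0}w^s=R_5$, the single move sends $x^t$ into $\mathcal{X}_{R_5}$, whence $x^{t+1}\in \mathcal{X}_{R_5}$ and $\phi(x^{t+1})\ge \phi^\ast$. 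In both cases $\phi(x^{t+1})\ge \phi^\ast$. Combining with the reaching step and setting $T_\epsilon:=t^\ast$, a one-line induction yields $x^t\in C_{T_0,\epsilon}$ for all $t\ge T_\epsilon$, which is the claim.

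The main obstacle is that $\phi$ is \emph{not} monotone along the whole trajectory: inside the approximate-equilibrium region $\mathcal{X}_{R_4}$, Lemma \ref{lem6} provides no guaranteed ascent and $\phi$ may decrease, so a naive ``bounded-and-increasing-hence-convergent'' argument fails. The real work is therefore the invariance step, where the slack between $R_4$ and $R_5$ must exactly compensate the worst possible one-step degradation of the equilibrium quality; this is precisely what Lemma \ref{lem7}, through the Lipschitz constant $L^0$ and the step bound $\sup_{t\ge T_0}w^t$, supplies. A secondary point requiring care is the bookkeeping of the second-order remainders $k^t$, so that the choice of $R_4$ genuinely yields a net increase of at least $\epsilon$. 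Finally, since the entire argument rests only on Lemma \ref{lem6}, Lemma \ref{lem7}, and boundedness of $\phi$, it applies verbatim to both the discrete and the continuous action-space settings once Assumption \ref{Cont_Int_Util} is in force.
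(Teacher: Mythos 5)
Your proof is correct and follows essentially the same route as the paper's: you use Lemma \ref{lem6} plus boundedness of $\phi$ to show $\mathcal{X}_{R_4}$ must be visited (the paper's contradiction argument), and then prove forward invariance of $C_{T_0,\epsilon}$ by the same two-case split, invoking Lemma \ref{lem7} with $\theta=w^t$ when $x^t\in\mathcal{X}_{R_4}$ and the $\epsilon$-ascent of $\phi$ otherwise. Your phrasing of invariance directly in terms of the level condition $\phi(x^t)\geq\min_{y\in\mathcal{X}_{R_5}}\phi(y)$ is just a restatement of membership in $C_{T_0,\epsilon}$, so there is no substantive difference.
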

\begin{proof}
Based on the construction of $\mathcal{X}_{R_4}, \mathcal{X}_{R_5},$ and $ C_{T_0,\epsilon}$
\begin{equation}
\mathcal{X}_{R_4}\subseteq \mathcal{X}_{R_5}\subseteq C_{T_0,\epsilon}.
\end{equation}
We firstly show that $\mathcal{X}_{R_4}$ will be visited infinitely often. Consider ($N\delta+sup_{t\geq T_0}k^t+\epsilon$)-equilibrium set. Suppose that $\exists T_1 \geq T_0$ such that $\forall t\geq T_1$ we have $x^t$ is out of ($N\delta+sup_{t\geq T_0}k^t+\epsilon$)- equilibrium set. Thus, using lemma \ref{lem6}, 
\begin{equation}\label{eq:29}
\phi(x^{t+1})-\phi(x^{t})\geq N\delta+sup_{t\geq T_0}k^t+\epsilon -\big(N\delta+k^t\big)\geq \epsilon.
\end{equation}
Hence,
\begin{equation}\label{eq:30}
\limsup_{t}\phi(x^{t+1})-\phi(x^{T_1})\geq \sum_{t\geq T_1}\epsilon.
\end{equation}
The left hand side of (\ref{eq:30}) is bounded while the right hand side of it is unbounded. This yields a contradiction which means the ($N\delta+sup_{t\geq T_0}k^t+\epsilon$)-equilibrium ($\mathcal{X}_{R_4}$) will be visited infinitely often. Whenever $x^t$ visit this set since $\|x^{t+1}-x^{t}\|=w^t\leq sup_{t\geq T_0}w^t$, according Lemma \ref{lem7}, $x^{t+1}$ belongs to $\mathcal{X}_{R_5}$. Thus, if $x^t$ visits ($N\delta+sup_{t\geq T_0}k^t+\epsilon$)-equilibrium set, then $x^{t+1}$ visits ($N\delta+sup_{t\geq T_0}k^t+L^0 sup_{t\geq T_0}w^t+\epsilon$)-equilibrium set. For some $T_1\geq T_0$, we have $x^{T_1}\in C_{T_0,\epsilon}$. (we know it occurs since $\{x^t\}_{t=1}^{\infty}$ visit $\mathcal{X}_{R_4}\subseteq C_{T_0,\epsilon}$ infinitely many times.) Then two cases may happen.\\
\textbf{Case1.} If $x^{T_1}\in \mathcal{X}_{R_4}$, we just proved that 
\begin{equation}
x^{T_1+1}\in \mathcal{X}_{R_5}\subseteq C_{T_0,\epsilon}.
\end{equation}
\textbf{Case2.} If $x^{T_1}\in C_{T_0,\epsilon}- \mathcal{X}_{R_4}$, since $x^{T_1}$ is outside $\mathcal{X}_{R_4}$, according to (\ref{eq:29}) we have $\phi(x^{T_1+1})\geq \phi(x^{T_1})$. Additionally, since $x^{T_1}\in C_{T_0,\epsilon}$, we have
\begin{equation}
\phi(x^{T_1})\geq min_{y\in \mathcal{X}_{R_5}}\phi(y).
\end{equation}
This means that $x^{T_1+1} \in C_{T_0,\epsilon}$. Therefore, from $T_1$ on, every elements of $\{x^t\}_{t=1}^{\infty}$ belongs to $C_{T_0,\epsilon}$. As a result, for every $T_0$, the path of play will be eventually trapped in $C_{T_0,\epsilon}$. This completes the proof.
\end{proof}
Denoting 
\begin{equation}
R_6= N\delta+\limsup_{t}k^t+L^0 \limsup_{t}w^t+\epsilon,
\end{equation}
since the argument of Theorem \ref{th:4} is valid for every $T_0$, path of play will eventually be trapped in $\{\inf_{T_0}C_{T_0,\epsilon}\}=C_{\epsilon}$, where
\begin{equation}
C_{\epsilon}=\{x\in Conv(\mathcal{A})| \phi(x)\geq min_{y\in \mathcal{X}_{R_6}}\phi(y)\}.
\end{equation}
Taking a similar proof to that of Corollary 5.1 of \cite{C9}, one can show that the path of play converges to $C_0$.
\begin{rem}
For any repeated game algorithms satisfying the conditions in the statement of Theorem \ref{th:4} for which $\{x^t\}_{i=1}^{\infty}$ converges, we have $\limsup_{t}w^t=0$. Therefore, set $C_0$ boils down to
\begin{equation}
C_{0}=\{x\in Conv(\mathcal{A})| \phi(x)\geq min_{y\in \mathcal{X}_{N\delta+\limsup_{t}k^t}}\phi(y)\}.
\end{equation}

Moreover, for discrete-action spaces, as $\{x^t\}_{t=1}^{\infty}$ converges, $\{k^t\}_{t=1}^{\infty}$ converges to $0$, so does $\limsup_{t}w^t$. Thus, for this case we can write 
\begin{equation}
C_{0}=\{x\in Conv(\mathcal{A})| \phi(x)\geq min_{y\in \mathcal{X}_{N\delta}}\phi(y)\}.
\end{equation}
\end{rem}
We note that the results of this subsection (Subsection \ref{Sub_sec_rep_game}) is valid for both discrete and continuous action spaces. Proofs for best response dynamics in games with continuous action spaces follows by replacing $\int$ with $\sum$ in all of the equations from (\ref{25}) to the end of the subsection.

\subsection{Response to an Estimation of Past Actions} \label{Res_Est_past_act}
In this subsection,  we provide some results regarding the case where each player responds to estimates of other players action, which are eventually accurate, instead of the actual actions. The significance of this problem is evident in case of network games where agents have no information over some other agents' actions, which are not their neighbors, and only some estimation of them are available for the agents. We claim that regardless of the underlying learning process as long as this estimation becomes arbitrarily accurate after passing sufficient time, the asymptotic behavior converges to a quantifiable set. Let us assume that maps $K$ and $Z$ are defined as in Section~\ref{Se_Nom} using \eqref{Map_K} and \eqref{Map_Z}, respectively.
To model this, let us first formulate the estimates of the players' actions by other players.
We first introduce a sequence $\{\hat{K}_n(x^0)\}_{n=1}^\infty$ such that for a given $n$, $\hat{K}_n(x^0)$ is an estimation of $K^n(x^0)$ and $\hat{K}_n(x^0)\in \mathbb{R}^N$. Then the following assumption is made to guarantee that the estimation becomes accurate asymptotically.
\begin{assum}\label{as_accu}
The estimation of $K^n(x_0)$ which is $\hat{K}_n(x^0)$ becomes more accurate as time grows, i.e., $\|\hat{K}_n(x^0)-K^n(x^0)\|\rightarrow 0$ as $n\rightarrow\infty$.
\end{assum}

Clearly, if part \eqref{Th:1_1} in Theorem \ref{Th:1} holds, we can write
\begin{equation}
\|K(x)-K(y)\|\leq 2(\delta_1+\delta_2)+L_C\|x-y\|,
\end{equation}
for all $x,y\in D'$ and $\delta_1,\delta_2,$ and $L_C$ defined in the statement of Theorem \ref{Th:1}. According to part 1 of Theorem~\ref{Th:1}, $K^n(x^0)$ will eventually be trapped in a compact set $S \subseteq D'$. Considering Assumption \ref{as_accu}, it is not difficult to show that $K\big(\hat{K}_n(x^0)\big)$,  which denotes the actions that players take at step $n+1$ with respect to the estimates of actions at step $n$, will eventually be trapped in the compact set 
\begin{equation}
S_d=\{x|\min_{y\in S}\|x-y\|\leq 2(\delta_1+\delta_2)\}.
\end{equation}
Moreover, we can write 
\begin{align}
\|K\big(\hat{K}_n(x^0)\big)-C^{n+1}(x^0)\|&\leq \|K\big(\hat{K}_n(x^0)\big)-K^{n+1}(x^0)\|\cr 
&+\|K^{n+1}(x^0)-C^{n+1}(x^0)\|.
\end{align}
This means as $n \rightarrow \infty$, 
\begin{equation}
\|K\big(\hat{K}_n(x^0)\big)-C^{n+1}(x^0)\|\leq (\delta_1+\delta_2)\frac{3-2L_C}{1-L_C}.
\end{equation}
Thus, if the players respond to a asymptotically accurate estimation of actions of other players players' actions will be eventually  trapped in the intersection of $N_{(\delta_1+\delta_2)(3-2L_C)/(1-L_C)}[x^*]$ and the set $S_d$.

\section{Application to Cournot Duopoly Games}\label{Example_and_Simulation}

This section demonstrates how the results developed in the previous sections can be applied to an example from Cournot duopoly games. In this form of the Cournot game, two firms compete with each other, where firm $i\in\{1,2\}$ produces $a_i$ units of good, at the marginal cost of $ c_i\geq 0$, and sells it at a price of 
\begin{equation}
P=\max\{d-Q,0\},
\end{equation}
where 
\begin{equation}
Q=a_1+a_2
\end{equation}
is the total supply,  and $d\geq c_i$. Each firm wants to maximize its profit through maximizing the following payoff function:
\begin{equation}
u_i=a_i(P-c_i).
\end{equation}
The best response dynamics for agent $i$ in this class of games can be implemented as follows. If $a_{j}>d, j\neq i$, then $P=0$, and  firm $i$ can at best make zero profit. Otherwise, if $a_{j}\leq d, j\neq i$, for any $a_i\in (0, d-a_{j}), j\neq i$, the utility derived by firm $i$ is given by:
\begin{equation}\label{utility_cournot}
u_i(a_i,a_{j})=a_i(d-a_i-a_{j}-c_i).
\end{equation}
Taking the derivative of (\ref{utility_cournot}) and setting it equal to zero, the best response of firm $i$, denoted by $a_i^B$, is given by 
\begin{equation}
a_1^B(a_2)=\max\{0, \frac{d-a_2-c_1}{2}\},
\end{equation}
\begin{equation}
a_2^B(a_1)=\max\{0, \frac{d-a_1-c_2}{2}\}.
\end{equation}
In a repeated Cournot duopoly game, by starting from an arbitrary initial action $(a_1^0,a_2^0)$, and considering $a_i \geq 0$, we obtain the following best response dynamics: 
\begin{equation}\label{eq-cor-1}
 a_1^{t+1}(a_2^t)=max\{0,\frac{d-a_2^t-c_1}{2}\},
\end{equation}
\begin{equation}\label{eq-cor-2}
 a_2^{t+1}(a_1^t)=max\{0,\frac{d-a_1^t-c_2}{2}\}.
\end{equation}
Therefore, in this particular example, $Z(a_1^t,a_2^t)=(a_1^{t+1},a_2^{t+1})$ for $a_1^t,a_2^t,a_1^{t+1}$ and $a_2^{t+1}$ as defined by \eqref{eq-cor-1} and \eqref{eq-cor-2}.
In the following, we first investigate the form of invariant set $C_0$ for the class of Cournot games in the case of a duopoly. Then we apply Theorem \ref{Th:1} to find the asymptotic behavior of the best response dynamics approximately and compare it with the existing accurate analysis for this class of games.
It is known that Cournot duopoly games with utility functions of the form of (\ref{utility_cournot})
are potential games with the potential function \cite{C29}: 
\begin{equation}\label{potential_cournot_duopoly}
    \phi(a)=d(a_1+a_2)-(a_1^2+a_2^2)-(a_1 a_2)-c_1a_1-c_2a_2.
\end{equation}The maximum of this potential function takes place at $(a_1,a_2)=\big((d-c)/3,(d-c)/3\big)$, with the maximum value of $(d-c)^2/3$, while the minimum of the potential function for the rectangular action space $0\leq a_1,a_2 \leq \bar{a}$ takes place at $(\bar{a},\bar{a})$, with the minimum value of $2(d-c)\bar{a}-3\bar{a}^2$. For the duopoly Cournot game, a bound for the term $O(\|f^{t+1}-f^t\|)$ can be obtained based on calculating the remainder of Taylor series expansion theorem \cite{C28}. For a bi-variable case of the first order, an upper bound for the absolute value of the remainder would be obtained by 
\begin{equation}
\frac{M}{2!}\|h\|^2,
\end{equation}
where $h=f^{t+1}-f^t$, and $M$ is an upper bound for $\|\partial^2 U(f)/(\partial f^1 \partial f^2)\|$. For a rectangular continuous action space where $0\leq a_1 \leq \bar{a}_1$ and $0\leq a_2 \leq \bar{a}_2$, we have 
\begin{equation}
M=|\int_0^{\bar{a}_1} \int_0^{\bar{a}_2} \phi(a) d a_2 d a_1|.
\end{equation}
Substituting \eqref{potential_cournot_duopoly} into the above expression, we obtain
\begin{equation}
M=\Big|(d-c)\big(\frac{\bar{a}_1^2\bar{a}_2+\bar{a}_1\bar{a}_2^2}{2}\big)-\big(\frac{\bar{a}_1^3\bar{a}_2+\bar{a}_1\bar{a}_2^3}{3}+\frac{\bar{a}_1^2\bar{a}_2^2}{4}\big)\Big|
\end{equation}
In particular, for the special case of $\bar{a}_1=\bar{a}_2=\bar{a}$, we get
\begin{equation}
M=|(d-c)\bar{a}^3-\frac{11}{12}\bar{a}^4|.
\end{equation}It is not difficult to show that the map $Z$ is contractive. As a result, the dynamics generated by $Z$  converge, and $w^t\rightarrow 0$ as $t\rightarrow\infty$, where $w^t$ is defined as in Section \ref{Sub_sec_rep_game}. Hence, we have
\begin{equation}
    C_0=\{x\in Conv(\mathcal{A})| \phi(x)\geq min_{y\in \mathcal{X}_{2M}} \phi(y) \}.
\end{equation}

Now let us assume there is a perturbation in the utility functions of the form 
\begin{equation}\label{Perturbation_Eq}
    u^M_i(a)=u_i(a)+\Delta u_i(a),\  i\in \{1,2\},
\end{equation}
where we assume that the perturbations $\Delta u_i(a), i\in \{1,2\}$ are Lipschitz continuous with Lipschitz constants $L_{\Delta u_i}$. Thus, the dynamics of the perturbed game can be 
derived by solving the following nonlinear equations
\begin{equation}\label{81}
-a_1^{t+1}+ \frac{d-a_2^t-c_1}{2}+\frac{d\Delta u_1(a_1,a_2)}{2d a_1}|_{a_1=a_1^{t+1},a_2=a_2^{t}}=0,
\end{equation}
\begin{equation}\label{82}
-a_2^{t+1}+ \frac{d-a_1^t-c_2}{2}+\frac{d\Delta u_2(a_1,a_2)}{2d a_2}|_{a_1=a_1^{t},a_2=a_2^{t+1}}=0.
\end{equation}We note that $a_1$ or $a_2$ takes the value of zero if the solution of the corresponding equation results in a negative value. Knowing that $Z$ is a contractive map with a fixed point $\Tilde{x}$, and considering $K$ as the map of the perturbed game,  which is expressed by \eqref{81} and \eqref{82}, we can obtain set $D'$ for this sequence by using similar calculations for the map $K$. However, we skip the derivations of $D'$ here as it does not play a role in our subsequent analysis. By applying Theorem \ref{Th:1} with $\delta_1=\sqrt{L^2_{\Delta u_1}+L^2_{\Delta u_2}}$, we get $\delta_2=0$ and $L_C=L_Z=\frac{1}{2}$. That implies that the perturbed dynamics $K^n(x_0)$ converge to the set $N_{2\delta_1}(\Tilde{x})$ for any initial action $x_0$.

It is worth mentioning that the same result holds for sequential Cournot games in which players take turns to take action instead of playing simultaneously. In other words, in the original and perturbed sequential game, the dynamics would be similar as defined in \eqref{eq-cor-1}, \eqref{eq-cor-2}, \eqref{81}, and \eqref{82}, except that due to the sequential nature of the play, in each step only the dynamics of one of the players change and the other one would remain the same. As a result,  following the same procedure introduced for the repeated Cournot game, the same bound would be achieved for the sequential Cournot game.

To illustrate the benefit of the above analysis, we next consider a special numerical example for the repeated and sequential Cournot game and its perturbation by assuming $d=400$, $c_1=200,c_2=100$, and
\begin{equation}
 \Delta u_i(a)=\frac{1}{1+ e^{-((a_1-\mu_1)^2+(a_2-\mu_2)^2)^2}}.   
\end{equation}

Moreover, we set $\mu_1$ and $\mu_2$ as the coordinates of the original Nash equilibrium since we want the perturbed game to have the maximum utility difference with the Nash equilibrium of the original game. To form the dynamics, we calculate $\frac{d\Delta u_1(a_1,a_2)}{2d a_2}$ in equations \eqref{81} and \eqref{82} as
\begin{equation}
\frac{2(a_1\!-\!\mu_1)((a_1\!-\!\mu_1)^2+(a_2\!-\!\mu_2)^2)(e^{-((a_1-\mu_1)^2+(a_2-\mu_2)^2)^2})}{(1+ e^{-((a_1-\mu_1)^2+(a_2-\mu_2)^2)^2})^2},
\end{equation}
\begin{equation}
\frac{2(a_2\!-\!\mu_2)((a_1\!-\!\mu_1)^2+(a_2\!-\!\mu_2)^2)(e^{-((a_1-\mu_1)^2+(a_2-\mu_2)^2)^2})}{(1+ e^{-((a_1-\mu_1)^2+(a_2-\mu_2)^2)^2})^2}.
\end{equation}

\begin{figure}[]
    \includegraphics[]{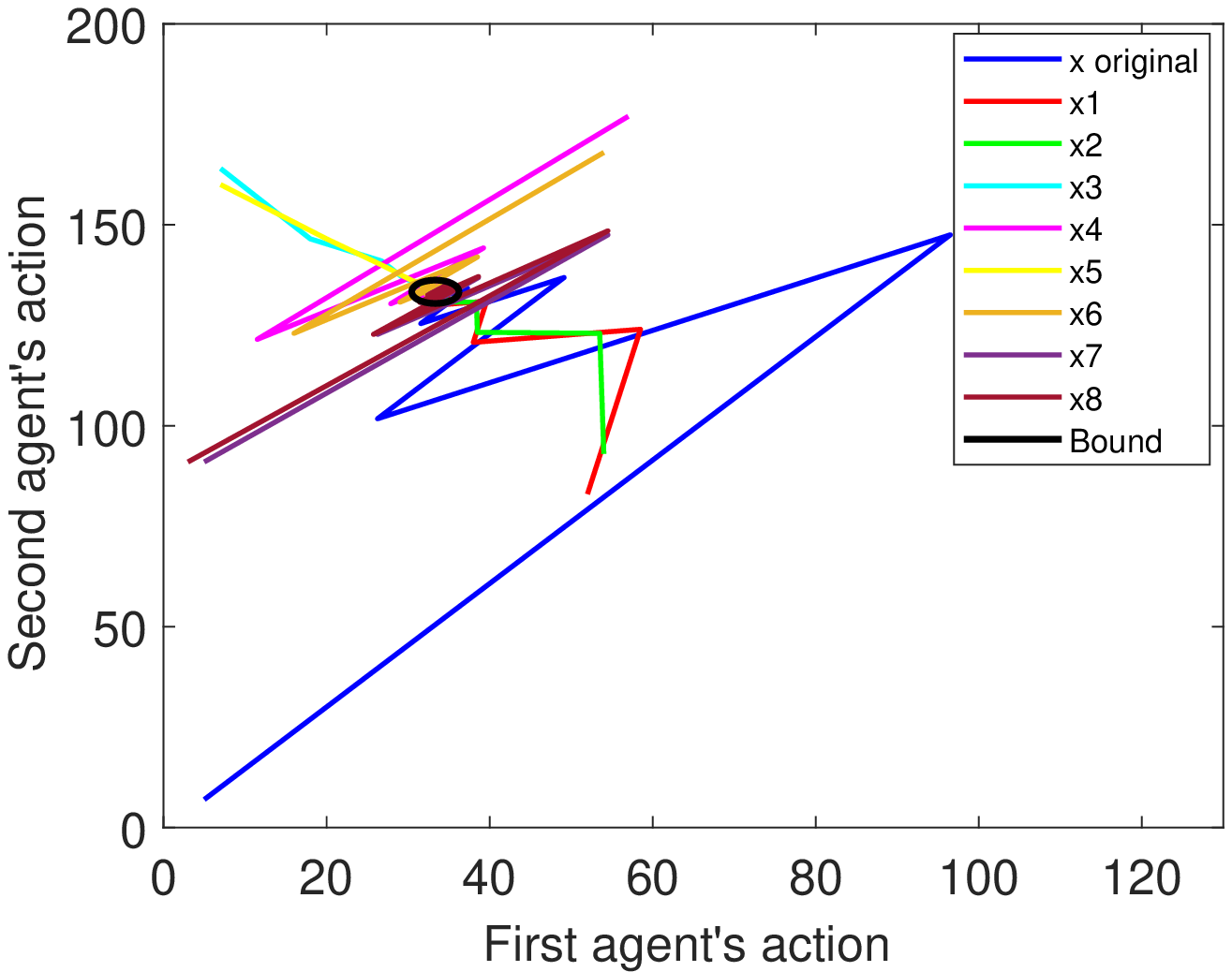}
        \includegraphics[]{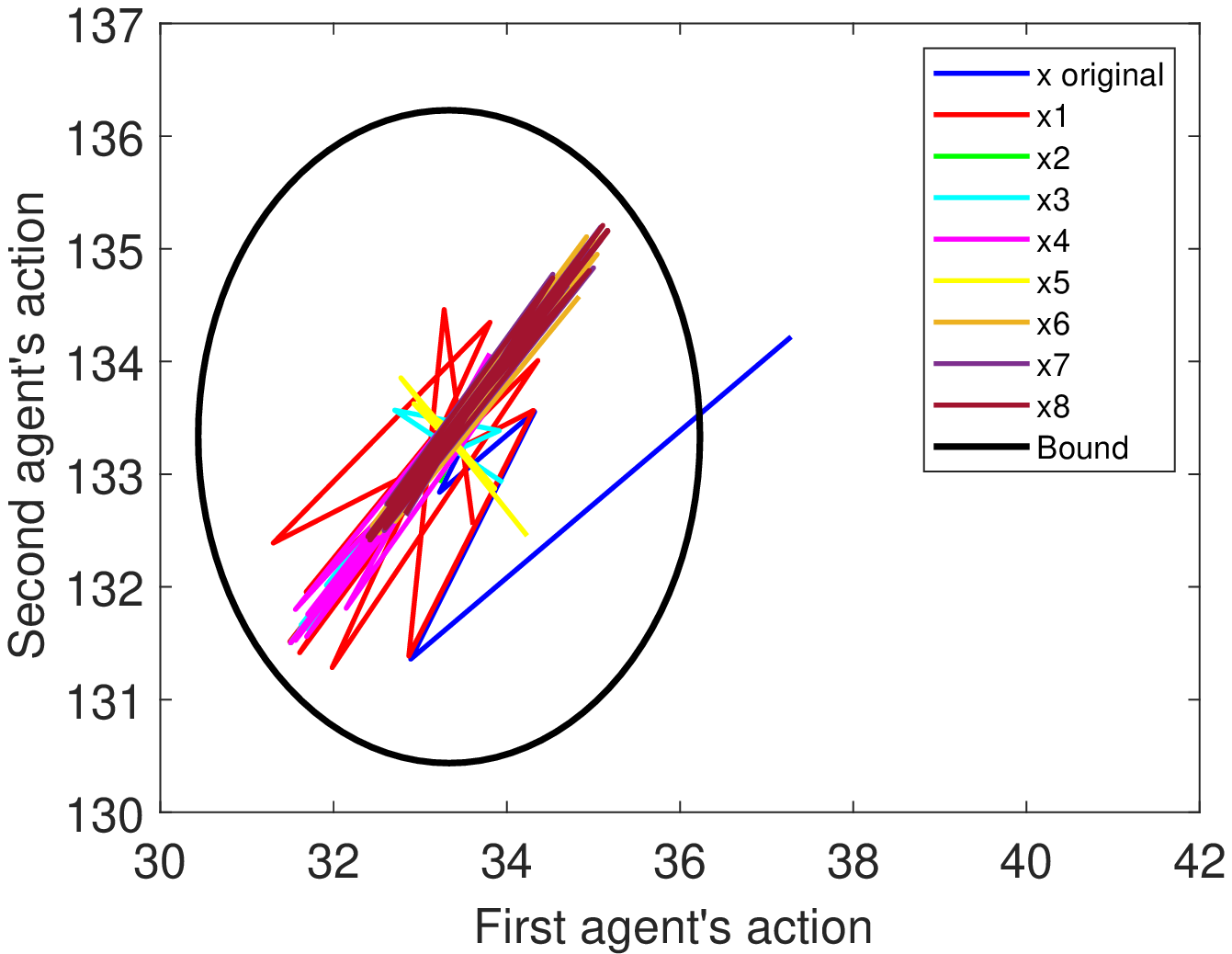}
    \centering
    \caption{The dynamics of both the original repeated game and the perturbed one with 8 different initial states}
    \label{fig:4}
\end{figure}

\begin{figure}[]
    \includegraphics[]{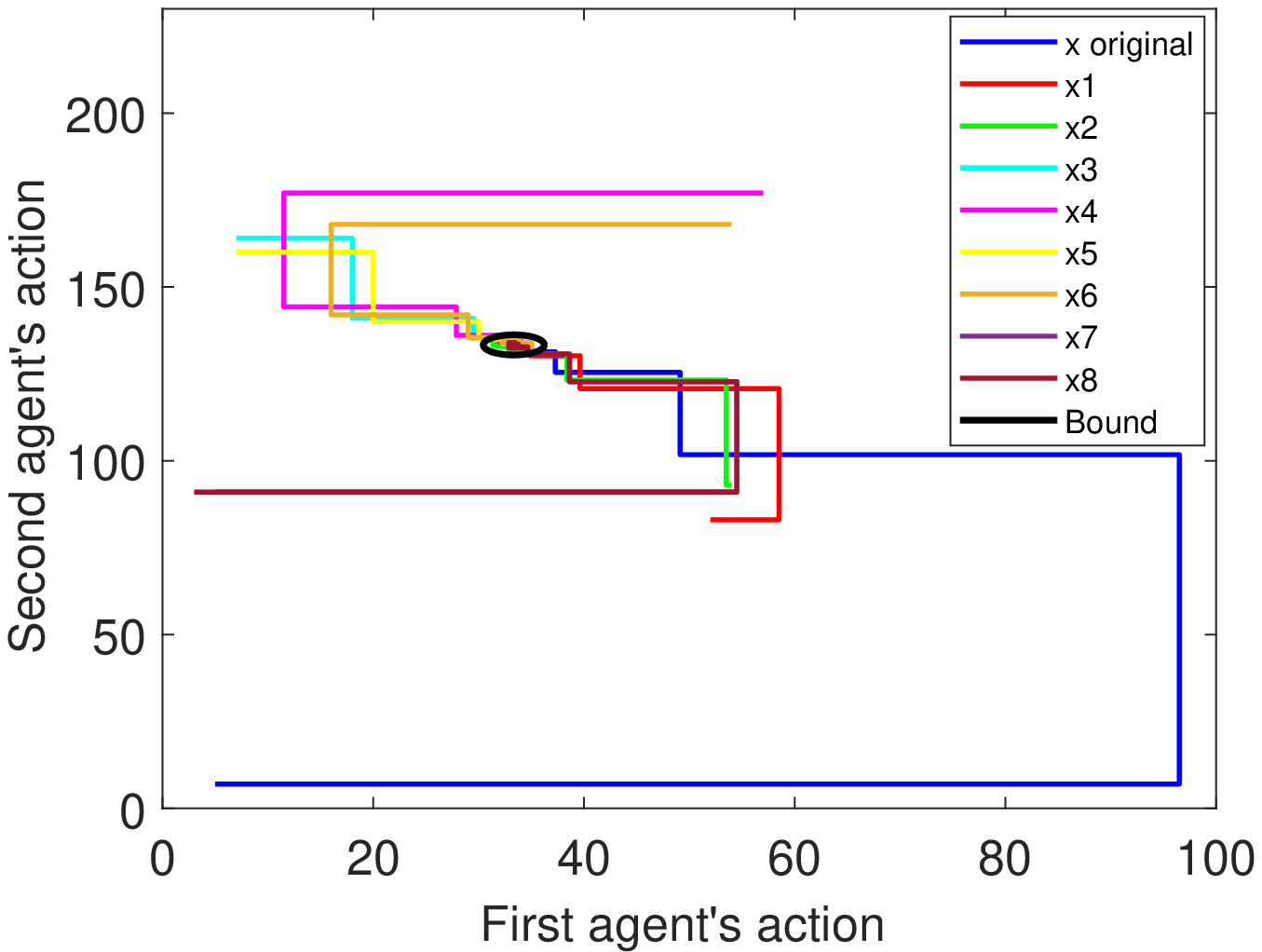}
        \includegraphics[]{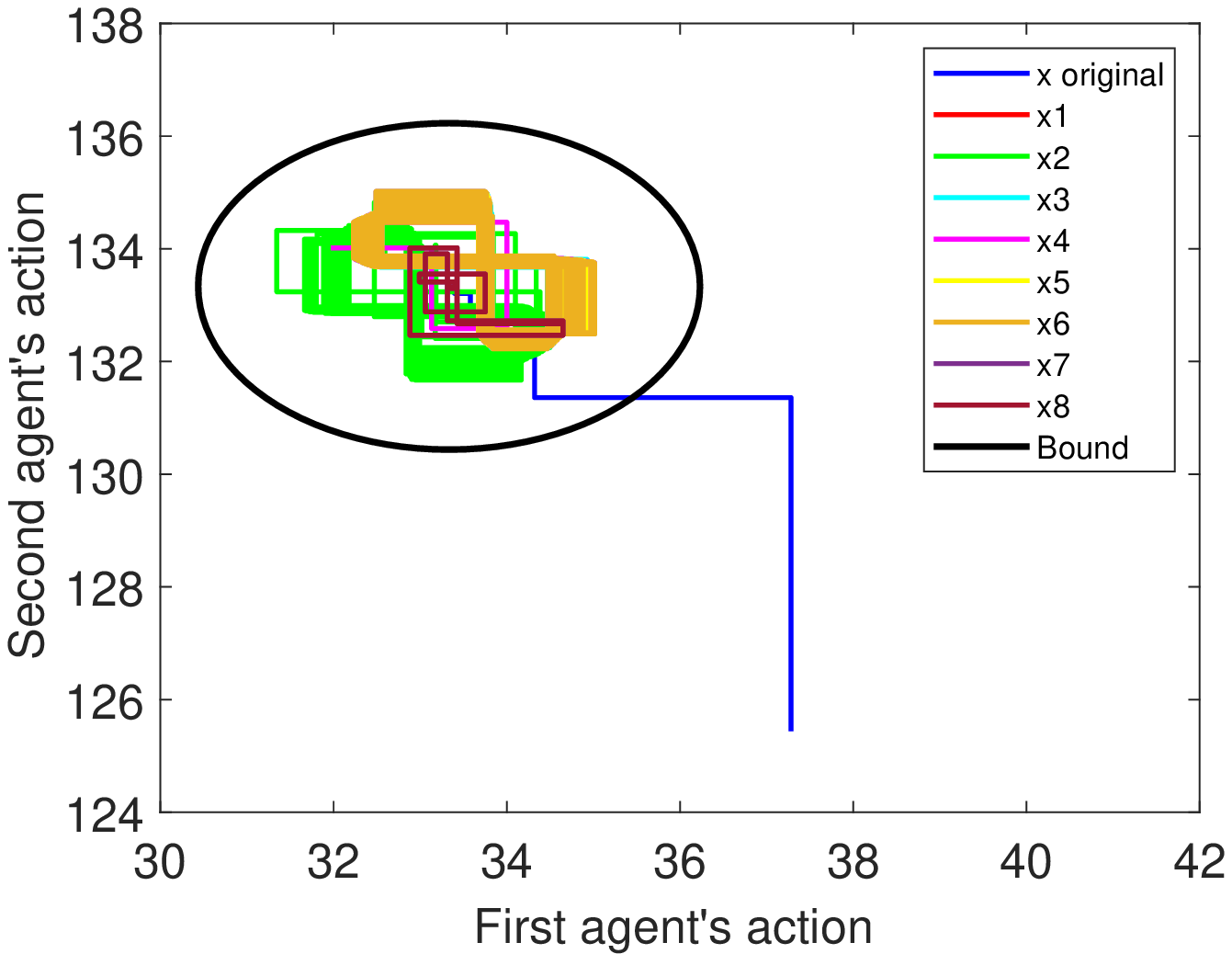}
    \centering
    \caption{The dynamics of both the original sequential game and the perturbed one with 8 different initial states}
    \label{fig:5}
\end{figure}
The dynamics of both the original repeated game and the perturbed one, and also the original sequential game and its perturbed one, with eight different initial states, are depicted in Figure~\ref{fig:4} and Figure~\ref{fig:5}, respectively. In each of these plots, the horizontal and vertical axes show the action space of each player. The bottom plot in both figures is a closer view of the same trajectories as in the top plot, from the 7th step onward, which is a good indicator of the limit behavior in these special examples. The black circle depicts our derived bound for which the limit behavior would be stuck. As it can be seen, 1) all dynamics finally get stuck in the black circle, and 2) the derived bound is a good indicator of the limit behavior and is far from being conservative, although it may not be minimal. 

\section{Conclusion}\label{Conclusion}
In this paper, we investigated a new framework to analyze the asymptotic behavior of games based on the similarity of their mappings to contractive mappings. In particular, we established some convergence results for the better/best response dynamics in the discrete-action sequential games as well as in the simultaneous finite-action repeated games. We showed that the asymptotic behavior of the dynamic game could be estimated via the fixed point of a contractive map, which is found to be close to the map of the dynamics (with some quantifiable error term). In the proposed framework, we also discussed the perturbation of the map of a given dynamics game and investigated the asymptotic behavior of the perturbed dynamic game considering the size of the perturbation. Finally, we illustrated the practicality of this framework on duopoly Cournot games.


\appendix
\section{Appendix I}\label{appendix} 

\begin{lemma}\label{lem0}
Let $\{\eta_i\}_{i=1}^\infty$ be a bounded sequence in $\mathbb{R}$, and set $\alpha=\limsup_n{\eta_n}$. Then, the following statements hold simultaneously:
\begin{enumerate}
    \item $\forall \epsilon>0$, there exist infinitely many indices $n$ such that $\alpha-\epsilon<\eta_n$.\label{lem0_1}
    \item $\forall \epsilon>0$, there are only finitely many indices $n$ such that $\alpha+\epsilon<\eta_n$.\label{lem0_2}
\end{enumerate}
\end{lemma}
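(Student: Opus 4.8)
The plan is to work directly from the definition of the limit superior in terms of tail suprema. Setting $s_n := \sup_{k\geq n}\eta_k$, the boundedness of $\{\eta_i\}$ guarantees that each $s_n$ is a finite real number, and since the supremum is taken over progressively smaller tails, the sequence $\{s_n\}$ is non-increasing. Being non-increasing and bounded below, it converges, and by the definition of $\limsup$ its limit is exactly $\alpha$; moreover, monotonicity forces $s_n \geq \alpha$ for every $n$. These three elementary facts — finiteness, monotonicity, and the relation $s_n \geq \alpha = \lim_m s_m$ — are the only ingredients I need, so I would record them first.

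For part \eqref{lem0_1}, fix $\epsilon>0$ and an arbitrary index $N$. Since $s_N \geq \alpha > \alpha-\epsilon$, the definition of the supremum yields some $k\geq N$ with $\eta_k > \alpha-\epsilon$. Because $N$ was arbitrary, this produces qualifying indices $k$ exceeding any prescribed bound, and hence infinitely many indices satisfy $\alpha-\epsilon < \eta_k$.

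For part \eqref{lem0_2}, fix $\epsilon>0$. Because $s_n \to \alpha$, there is an index $N$ with $s_N < \alpha+\epsilon$. Then for every $k\geq N$ we have $\eta_k \leq s_N < \alpha+\epsilon$, so any index $n$ with $\eta_n > \alpha+\epsilon$ must lie in the finite set $\{1,\ldots,N-1\}$; hence only finitely many such indices exist.

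The argument contains no genuine obstacle — it is the standard characterization of $\limsup$ for a bounded sequence — so the only care required is to keep the inequalities pointed correctly: monotonicity gives $s_n \geq \alpha$ (the key to part \eqref{lem0_1}), while convergence gives $s_N$ eventually below $\alpha+\epsilon$ (the key to part \eqref{lem0_2}). An alternative route for part \eqref{lem0_1} is a contradiction argument: were there only finitely many indices with $\alpha-\epsilon<\eta_n$, some tail would be bounded above by $\alpha-\epsilon$, forcing $s_N\leq\alpha-\epsilon<\alpha$ and contradicting $s_N\geq\alpha$; the direct route above is cleaner, however, and I would present that one.
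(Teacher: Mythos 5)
Your proof is correct, but it follows a genuinely different route from the paper's. You work with the tail suprema $s_n=\sup_{k\geq n}\eta_k$, exploiting only their finiteness, monotonicity, and convergence to $\alpha$; part \eqref{lem0_1} then falls out of the definition of supremum and part \eqref{lem0_2} out of the convergence $s_n\to\alpha$, both by direct arguments. The paper instead uses the characterization $\limsup_n\eta_n=\max\overline{\{\eta_n\}}$ (the maximum of the set of subsequential limits): for part \eqref{lem0_1} it extracts a subsequence converging to $\alpha$, and for part \eqref{lem0_2} it argues by contradiction, invoking Bolzano--Weierstrass to extract from the putative infinitely many large terms a convergent subsequence whose limit would exceed $\alpha$, contradicting maximality. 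Your approach is more elementary and self-contained: it needs only the convergence of bounded monotone sequences, avoids subsequence extraction and compactness entirely, and proves part \eqref{lem0_2} directly rather than by contradiction. The paper's approach, on the other hand, meshes naturally with the notation $\overline{\{\eta^n\}}$ it has set up (the set of limit points) and with how the lemma is invoked later, e.g.\ in Lemma \ref{lem1} where $\limsup_n\delta^n=\max\overline{\{\delta^n\}}$ is used as the starting point. Note that each proof silently relies on the equivalence of the two characterizations of $\limsup$ (yours when you identify $\lim_n s_n$ with $\alpha$, the paper's when it asserts $\limsup_n\eta_n=\max\overline{\{\eta_n\}}$), so neither is strictly more rigorous than the other on that score; both gaps are standard facts.
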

\begin{proof}
For a bounded sequence $\{\eta_i\}_{i=1}^\infty$ in $\mathbb{R}$, we have $\limsup_n{\eta_n}=\max\overline{\{\eta_n\}}$. Therefore, there exists a subsequence of $\{\eta_i\}_{i=1}^\infty$  that converges to $\alpha$. As a result, for every $\epsilon>0$, elements of such subsequence will eventually lie within $N_{\epsilon}(\alpha)$, and hence, they must be all greater than $\alpha-\epsilon$. Therefore, there are infinitely many indices $n$ such that $\alpha-\epsilon<\eta_n$.

We show the second statement by contradiction. Let us assume that statement 2) does not hold. Then, there is $\epsilon_0>0$ such that infinitely many indices $n$ satisfying $\alpha+\epsilon<\eta_n$ exist. As a result, there exists a bounded subsequence of $\{\eta_i\}_{i=1}^\infty$ such that all of its elements are greater than $\alpha+\epsilon$. Therefore, $\{\eta_i\}_{i=1}^\infty$ has a convergent subsequence whose limit is greater than $\alpha+\epsilon$, which contradicts the fact that $\alpha$ is an upper bound for $\overline{\{\eta_n\}}$. This completes the proof.
\end{proof}



\begin{proof}{(Proof of  Lemma \ref{lem2}):}
Consider the space of all mappings $\mathcal{V}:\mathbb{R}^N\to\mathbb{R}^N$, which forms a vector space over $\mathbb{R}$. Consider the subset $\mathcal{S} \subseteq \mathcal{V}$ containing all the bounded Lipschitz continuous functions defined on $D'$. Clearly, this set is nonempty since any constant continuous function belongs to $\mathcal{S}$. Moreover, for every $U,V \in \mathcal{S}$ and for all $\lambda \in \mathbb{R}$, $ U+\lambda V \in \mathcal{S}$, which means that $\mathcal{S}$ forms a linear subspace of $\mathcal{V}$. Therefore, for an arbitrary component $Z\in \mathcal{V}$, the following optimization problem for some well-defined norm $\|\cdot\|$ over the space of functions has an optimal solution:
\begin{align}\label{eq:2}
& \min_{F} \|Z- F\|\nonumber \\
& s.t.~~F\in \mathcal{S}
\end{align}
Since the constraint set of this optimization problem forms a hyperplane in $\mathcal{V}$, there is always a
nonempty feasible set for this problem. The optimal solution can be obtained through the projection of $Z$ on the
hyperplane $\mathcal{S}$. In fact, this optimal solution estimates the mapping $Z$ with the closest possible Lipschitz continuous function. Let us denote the optimal solution of (\ref{eq:2}) by $F_0$. Since $F_0$ is bounded on $D'$, for every $0\leq \alpha\leq 1$, there exists some $\delta_{\alpha}>0$ such that $\|F_0-\alpha F_0\|\leq \delta_{\alpha}$. Since $\|Z-F_0\|\leq \delta$ for some $\delta$, we have $\|Z-\alpha F_0\|\leq \delta+\delta_{\alpha}$, where $\alpha F_0$ has a sufficiently small Lipschitz constant
to be a contractive mapping. Now, the function $\alpha F_0$ satisfies the conditions in Lemma \ref{lem2}. Moreover, for any other function belonging to the feasible set (\ref{eq:2}) the above method leads to another ``close" contractive mapping (although such a close contractive mapping may not be the closest one to $Z$).
\end{proof}

\bibliographystyle{plain}        

\end{document}